\documentclass{mjsartcl}
\usepackage{header}

\title{Blindfolded pursuit with delays of your choice}
\author{
    \person{Torben Sch{\"u}renberg}{0009-0006-5947-0172}{1}%
    \and
    \person{Maximilian J. Stahlberg}{0000-0002-0190-2693}{1}%
}
\affiliation{
    \institute{1}{University of Bremen}{Germany}{\twomailnames{torsch}{maximilian.stahlberg}}{uni-bremen.de}
}
\hypersetup{
    pdfkeywords={%
        pursuit--evasion,
        hunter and rabbit,
        firefighting on graphs,
        network design}
}

\begin{document}

\maketitle

\begin{abstract}
    \noindent
    We study pursuit-evasion games on graphs with a single pursuer and an invisible evader.
    The pursuer may assign integer travel times to the edges of the graph and specify a finite sequence of vertices to query, one per time step.
    The evader then chooses a walk over the same time horizon, aiming to elude all queries.
    With unit travel times, the setting in which the evader must move at every time step is known as the hunter and rabbit game, while the variant in which the evader can wait at a vertex can be phrased as a firefighting game: the vertices of a burning graph must be extinguished, and any vertex left burning reignites its neighbors.
    For both settings, we show that the power to choose travel times allows a single pursuer to succeed in polynomial time on any graph.
    This contrasts with unweighted graphs, where the number of hunters or firefighters needed can grow linearly in the number of vertices.
    If the evader, in addition to waiting, may start at an earlier time unknown to the pursuer, we show that the pursuer still wins on any graph given exponential time.
\end{abstract}

\section{Introduction}%
\label{sec:intro}

In the classic \emph{hunter and rabbit game}~\cite{Britnell2013,Haslegrave2014,Fedorov2011}, a pursuer tries to catch an evader performing a walk on a finite graph.
The pursuer cannot see the evader but is allowed to inspect one vertex per time step: if the evader happens to be there, they are caught.
The evader, however, knows the pursuer's strategy and attempts to find a walk that eludes all observations.
The \emph{hunting number}~\cite{Abramovskaya2016} of a graph is defined as the minimum number of pursuers that have a joint winning strategy by querying one vertex per time step each.
The hunting number for general graphs of order~\(n\) can be as large as \(n - 1\) (attained by the complete graph), and pinpointing it for restricted graph classes has been an active area of research~\cite{Abramovskaya2016,Bolkema2019,Gruslys2015,BenAmeur2026}.

In this work we ask: what if the pursuer has to search a general graph alone, but can alter the playing field to their advantage?
One possible modification would be to remove vertices or edges from the graph to make it easier to search.
Of course, by removing all edges, the pursuer has the evader firmly trapped, but this is cruel and not interesting.
One may thus ask for the minimum number of vertices or edges that must be deleted to guarantee the pursuer a winning strategy.
But since the connected graphs winnable by a single pursuer form a subclass of trees~\cite{Britnell2013,Haslegrave2014}, the resulting network can remain at best minimally connected---still a major intervention if the graph is initially dense.
Here, we do not permit the pursuer to alter the graph's topology.
Instead, we only allow them to assign finite \emph{delays} to its edges.
This has the effect that the evader, when entering an edge with delay~\(d \in \Zp\) at time~\(t\), will be safe for the next \(d - 1\) time steps and then arrive at the edge's other endpoint at time \(t + d\).
Delays present a trade-off for the pursuer: they can be used to stretch the evader's possible appearances out over time, potentially reducing the vertices that can be occupied at individual time steps, but larger delays may also prolong a successful search.
We next state this setting more formally.

Given a graph \(G = (V, E)\), a \emph{strategy} for the pursuer comprises an assignment \(\tau \colon E \to \Zp\) of integer delays to each edge and a partial \emph{query function} \(q \colon \Zp \rightharpoonup V\).
If \(q(t)\)~is defined, it denotes the vertex inspected by the pursuer at time~\(t\); otherwise the pursuer passes at that time.
A strategy \((\tau, q)\) is \emph{winning} if either \(V = \emptyset\) or if it hits every walk in~\(G\) whose total delay is at least the strategy's \emph{duration} \(\max\supp(q)\): every such walk must admit a prefix with end vertex~\(v\) and total delay~\(t\) such that \(q(t) = v\).
If \(G\)~admits a winning strategy with duration~\(k\) for the pursuer, then we say that \(G\)~is \emph{solvable in time~\(k\)}.
Note that, by convention, no query at time~\(0\) is allowed, so that the pursuer cannot remove a starting vertex from the evader.
This limitation has the benefit that any winning strategy can be generalized to the setting where the pursuer cannot make any queries before a fixed time \(t_{\mathrm{p}} \in \Zp\): simply scale all delays and all query times by~\(t_{\mathrm{p}}\).

\begin{figure}
    \centering
    \newcommand{\ktwo}[4]{%
    \begin{tikzpicture}[baseline=(v_0.south)]
        \node[_vertex,         #1] (v_0) at (-1.00,  0.00) {};
        \node[_vertex, _small, #2] (e_0) at ( 0.00, +0.15) {};
        \node[_vertex,         #3] (v_1) at (+1.00,  0.00) {};
        \node[_vertex, _small, #4] (e_1) at ( 0.00, -0.15) {};

        \draw[_arc] (v_0) to[out=15,  in=180, looseness=0.9] (e_0);
        \draw[_arc] (e_0) to[out=0,   in=165, looseness=0.9] (v_1);
        \draw[_arc] (v_1) to[out=195, in=0,   looseness=0.9] (e_1);
        \draw[_arc] (e_1) to[out=180, in=-15, looseness=0.9] (v_0);
    \end{tikzpicture}%
}
\newcommand{\xxxxxxxxxxxxxxxx}{\phantom{\ktwo{}{}{}{}}}

\begin{tikzpicture}[
    b/.style={_burning},
    h/.style={_hit},
    brace/.style={
        decorate,
        decoration={brace, amplitude=4.5pt, raise=-1.25pt},
        thick,
    },
]
    \matrix[
        matrix of nodes,
        row sep=1mm,
        column sep=1mm,
        column 1/.style={column sep=0mm},
        column 2/.style={nodes={text width=3.6em, align=left}},
    ] {
        && \(t = 0\) & \(t = 1\) & \(t = 2\) & \(t = 3\) & \(t = 4\) \\
        (a) & baseline                  &
        \node (0a) {\ktwo{b}{ }{b}{ }}; &
        \node (1a) {\ktwo{h}{b}{ }{b}}; &
        \node (2a) {\ktwo{b}{ }{h}{ }}; &
        \node (3a) {\xxxxxxxxxxxxxxxx}; &
        \node (4a) {\xxxxxxxxxxxxxxxx}; \\
        (b) & waiting                   &
        \node (0b) {\ktwo{b}{ }{b}{ }}; &
        \node (1b) {\xxxxxxxxxxxxxxxx}; &
        \node (2b) {\xxxxxxxxxxxxxxxx}; &
        \node (3b) {\xxxxxxxxxxxxxxxx}; &
        \node (4b) {\xxxxxxxxxxxxxxxx}; \\
        (c) & freestart                 &
        \node (0c) {\ktwo{b}{b}{b}{b}}; &
        \node (1c) {\xxxxxxxxxxxxxxxx}; &
        \node (2c) {\xxxxxxxxxxxxxxxx}; &
        \node (3c) {\xxxxxxxxxxxxxxxx}; &
        \node (4c) {\xxxxxxxxxxxxxxxx}; \\
    };

    \node (1bc) at ($(1b.center)!0.5!(1c.center)$)
        {\ktwo{h}{b}{b}{b}};
    \node (2bc) at ($(2b.center)!0.5!(2c.center)$)
        {\ktwo{b}{ }{h}{b}};
    \node (3abc) at ($(3a.center)!0.5!($(3b.center)!0.5!(3c.center)$)$)
        {\ktwo{h}{b}{ }{ }};
    \node (4abc) at ($(4a.center)!0.5!($(4b.center)!0.5!(4c.center)$)$)
        {\ktwo{ }{ }{h}{ }};

    \draw[brace] (0b.east) -- (0c.east);
    \draw[brace] (2a.east) -- (2bc.east);
\end{tikzpicture}
    \caption{%
        A winning strategy for the pursuer on a single edge~\(e\) that works in all model variants studied.
        It assigns a travel time of \(\tau(e) = 2\) to the edge, which is represented by two directed paths of length~\(\tau(e)\) whose internal vertices (small) cannot be queried.
        Thick blue vertices mark queries by the pursuer while filled red vertices indicate possible evader locations.
        If the evader stays at or moves to a vertex before it is queried, they are caught.
        In setting~(b) the evader may wait at a vertex, otherwise they must move along the edge between time steps.
        In setting~(c) the evader started at an unknown time \(t_{\mathrm{e}} \leq 0\), so they could be traversing the edge at time~\(0\).
        For the combined setting, the outcome equals~(c) for this graph.
    }
    \label{fig:models}
\end{figure}

In \Cref{sec:waiting}, we give the evader the power to perform a \emph{lazy walk} that can wait at vertices, which can be modeled by a normal walk in an extension of~\(G\) where a loop with delay one is added to each vertex.
If the pursuer still has a winning strategy on a so-augmented graph, then we call~\(G\) \emph{waiting-solvable}.
The hunter and rabbit game with loops has a natural interpretation as a firefighting game, in which a burning graph needs to be extinguished vertex-by-vertex, and where vertices left burning reignite their neighbors.
We discuss this model further in \Cref{sec:related}.

Another power the evader may find useful is the ability to start their (lazy) walk at a time \(t_{\mathrm{e}} \leq 0\) unknown to the pursuer.
The evader then wins against a strategy \((\tau, q)\) if their walk has a total delay of at least \(\max\supp(q) - t_{\mathrm{e}}\) and is again not located at~\(q(t)\) at any time \(t \in \supp(q)\).
This differs from the setting in which the pursuer has to delay their first query until a time \(t_{\mathrm{p}} > 0\), as the pursuer is now unaware of the scaling factor \(1 - t_{\mathrm{e}}\) needed to adapt their strategy.
An early start at an unknown time can also be interpreted as the evader being allowed to start their walk inside an edge, though only at discrete positions and with a predetermined travel direction.
If the pursuer maintains a winning strategy on a graph~\(G\) in this setting, then we call~\(G\) \emph{freestart-solvable}.
In \Cref{sec:waiting-freestart}, we consider the pursuer's most difficult endeavor, where the evader may both wait at vertices and start ahead of time.
\Cref{fig:models} shows a strategy for a graph containing only one edge that is winning for all combinations of evader capabilities.

\subsection{Warm-up: directed graphs}

As a warm-up, we briefly consider the setting of directed graphs.
Here, the problem becomes much easier for the pursuer, who may now specify distinct delays for antiparallel arcs.
This yields a very quick and simple winning strategy, which the reader is invited to pause and ponder.

\begin{observation}
    Every directed graph is solvable in time~\(n\).
\end{observation}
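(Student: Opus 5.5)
Number the vertices \(v_1, \dots, v_n\) in an arbitrary order. The natural choice is to let every arc point "backwards" in time relative to the query order, so that the evader can only run into vertices that have already been cleared. Concretely, I assign to each arc \((v_i, v_j)\) the delay \(\tau(v_i, v_j) = n\) if \(i < j\) and \(\tau(v_i, v_j) = 1\) if \(i > j\). (Antiparallel arcs receive different delays, which is exactly the extra freedom of the directed setting.) The query function is \(q(t) = v_t\) for \(t = 1, \dots, n\), giving duration \(n\).

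I then argue that every walk of total delay at least \(n\) is hit. Take such a walk with start vertex \(v_i\). Consider the prefix of the walk whose total delay is at most \(i\), maximal among such prefixes. By the choice of delays, any arc traversed before total delay exceeds \(n\) must be a "downward" arc of delay one, since an upward arc has delay \(n\). The one exception is an upward arc as the very first step, which carries the walk past time \(n\); I handle that case separately below.

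So along downward arcs, at time \(t\) the walk sits at some vertex \(v_j\) with \(j \le i - t\). Define \(f(t) = j + t\), where \(v_j\) is the vertex occupied at time \(t\). Then \(f(0) = i\), \(f\) is nonincreasing along downward unit arcs, and the walk is caught at time \(t\) exactly when \(f(t) = 2t\), i.e. when \(j = t\). A cleaner way to see this is a discrete intermediate value argument on \(g(t) = j - t\). At \(t = 0\) we have \(g = i > 0\). Each unit downward step decreases \(j\) by at least one and increases \(t\) by one, so \(g\) strictly decreases. Since \(j \ge 1\) while \(t\) grows, \(g\) eventually becomes \(\le 0\), and the walk cannot end beforehand because its total delay is at least \(n \ge i\). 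However, \(g\) may jump over \(0\), so a sign change alone does not give a hit. This jumping is the main obstacle.

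To fix it, I refine the choice: the walk is at vertex \(v_j\) at time \(t\) with \(j > t\), and the next step goes to \(v_{j'}\) with \(j' < j\) at time \(t+1\). If \(j' < t+1\), the walk is not caught at time \(t+1\). The key point is that this cannot happen, because such a vertex \(v_{j'}\) has already been queried at time \(j' \le t\) and the walk was not there then. A plain sign-change argument therefore fails, and the scheme itself needs adjusting.

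The simplest remedy is to make the queries run in the same direction as the arcs: \(q(t) = v_{n+1-t}\), with unit delays on arcs from lower to higher index and delay \(n\) on the reverse arcs. Now at time \(t\) the query has already cleared \(v_n, \dots, v_{n+2-t}\), and a unit-delay arc only leads to higher indices, i.e. into the cleared region. An evader starting at \(v_i\) is at time \(t\) at index \(j \ge i + t\) (if it uses only unit arcs), while the query sits at index \(n+1-t\). So \(h(t) = (n+1-t) - j\) starts at \(n+1-i \ge 1\) and decreases by at least \(2\) per step, which again allows skipping.

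I expect this skipping issue to be the real crux. I would resolve it by choosing the delays so that the evader's position moves at exactly the same rate as the sweep, and by using the fact that jumping past the query front means jumping into already-queried territory. The potential should then be a "before or after" comparison of the evader's index with the sweep front rather than an equality. With the first scheme (queries \(v_1, \dots, v_n\) and cheap arcs going downward), the invariant to prove by induction on \(t\) is: an uncaught walk is at time \(t\) at a vertex of index \(> t\), or inside an expensive arc. Since a cheap arc from \(v_j\) with \(j > t\) to \(v_{j'}\) with \(j' < j\) may land at \(j' \le t\), I would finally set the cheap delays to \(\tau(v_j, v_{j'}) = j - j'\). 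Then the evader's index plus the current time is invariant along cheap arcs, and it is caught exactly when its index equals the time, which happens at time \((i + \text{time})/2\) under the appropriate parity. Expensive arcs get delay \(n\), which exits the horizon. The remaining step is to verify the parity and the exact hit time, adjusting the query order (e.g. querying \(v_{n+1-t}\) with cheap arcs of delay \(j'-j\) upward) until the "meeting" is exact.
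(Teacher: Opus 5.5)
Your proposal never settles on a strategy, and each scheme you actually write down loses for the pursuer. The root problem is that you treat arcs of delay \(n\) as barriers that ``exit the horizon.'' The winning condition works the other way: the pursuer must hit every walk whose total delay is \emph{at least} the duration \(n\). A long arc is therefore an escape route for the evader unless its arrival is queried.

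In your first scheme (upward arcs delay \(n\), downward arcs delay \(1\), \(q(t) = v_t\)), take a complete digraph with \(n \geq 3\) and the one-arc walk \((v_1, v_2)\). The evader sits at \(v_1\) at time \(0\), when no query is allowed. It is inside the arc at times \(1, \ldots, n-1\) and arrives at \(v_2\) at time \(n\), where \(q(n) = v_n \neq v_2\). This walk has total delay \(n\) and is never hit, so the case you promised to handle ``separately below'' cannot be handled.

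Your final scheme (downward delays \(j - j'\), upward delay \(n\), \(q(t) = v_t\)) loses to the same walk. It also loses to \((v_3, v_1, v_2)\): the evader is in transit at time \(1\) when \(v_1\) is queried, reaches \(v_1\) at time \(2\) while \(q(2) = v_2\), and then leaves past the horizon. The invariant ``index \(+\) time \(= i\)'' gives a catch only if the walk lands exactly on index \(i/2\) at time \(i/2\). Walks can jump over that index, and no parity adjustment changes this.

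The missing idea is much simpler than a sweep, and there is no skipping issue to resolve. Since \(n \geq 1\), every walk of total delay at least \(n\) traverses at least one arc, so it suffices to catch the evader at the end of its \emph{first} arc. Use the antiparallel freedom to make the arrival time encode the arrival vertex, rather than making index plus time invariant. With \(V = [n]\), set \(\tau\bigl((i,j)\bigr) = j\), so the delay depends only on the head, and query \(q(j) = j\) at each time \(j \in [n]\). An evader whose first arc is \((i,j)\) arrives at \(j\) at exactly time \(j\) and is hit, whatever its start vertex. The duration is \(n\), and nothing after the first arc matters. This is the paper's proof.
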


\begin{proof}
    Let \(G = (V, A)\) be a directed graph on nodes \(V = [n]\).
    A winning strategy (\(\tau, q\)) for the pursuer assigns a delay of~\(\smash{\tau\bigl((i, j)\bigr)} \defas j\) to each arc \((i, j) \in A\) and queries \(q(j) \defas j\) at time \(j \in V\).
    Note that this strategy has a duration of \(\max\supp(q) = n\).
    Consider an evader whose first move is to traverse some arc \((i, j) \in A\).
    Then, they arrive at vertex~\(j\) at time \(\smash{\tau\bigl((i, j)\bigr)} = j = q(j)\).
\end{proof}

If the evader is allowed to wait at vertices, then the pursuer's strategy can be modified to catch them in quadratic time.

\begin{observation}
    Every directed graph is waiting-solvable in time \(n^2 + n\).
\end{observation}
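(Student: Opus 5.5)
The plan is to generalize the strategy from the previous observation: give every arc a delay that depends only on its head, and use two kinds of query blocks. An initial block of queries forces the evader to leave its start vertex early. Then, for each vertex~\(j\), a contiguous block of queries at~\(j\) covers every possible arrival time at~\(j\) via the evader's first arc move. Throughout, let \(G = (V, A)\) have vertices \(V = [n]\).

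\textbf{Step 1 (forcing an early departure).} For \(i \in [n]\), set \(q(i) \defas i\). Consider a lazy walk that starts at vertex~\(i\) and remains there up to time~\(i\), that is, whose first \(i\) steps are all waits on the delay-one loop. Its prefix of delay~\(i\) ends at~\(i\), so it is hit at time~\(i\). Hence every surviving walk starting at~\(i\) must traverse some arc \((i,j) \in A\) at a departure time \(s \in \{0, \dots, i-1\}\).

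\textbf{Step 2 (delays and arrival windows).} Set \(\tau\bigl((i,j)\bigr) \defas jn + 1\) for every arc \((i, j) \in A\). A walk that departs \(i\) at time~\(s \le i-1\) along \((i,j)\) arrives at~\(j\) at time
\[
s + jn + 1 \in [jn+1,\, jn+i] \subseteq [jn+1,\, jn+n].
\]
For each \(j \in [n]\) and each \(t \in [jn+1, jn+n]\), set \(q(t) \defas j\). The arrival prefix then ends at~\(j\) exactly at a time when~\(j\) is queried, so the walk is hit. Any walk of total delay at least the duration must contain one of the prefixes analysed above: either the initial waiting prefix of length~\(i\), or the prefix ending with the first arc traversal. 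So every such walk is hit.

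\textbf{Step 3 (consistency and duration).} It remains to check that \(q\) is well defined. The blocks \([1,n]\) and \([jn+1, jn+n]\) for \(j = 1, \dots, n\) are pairwise disjoint, and time~\(0\) is never queried. The largest query time is \(n \cdot n + n = n^2 + n\), which is the claimed duration.

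The only delicate point is choosing the offsets so that the blocks do not collide. The ``\(+1\)'' in the delay is needed to separate the first arrival block from the initial block \([1,n]\), and it also keeps the maximum query time at exactly \(n^2+n\).
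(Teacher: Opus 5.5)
Your proposal is correct and is essentially the paper's proof. It uses the same delays \(\tau\bigl((i,j)\bigr) = jn + 1\), the same initial queries \(q(i) = i\) that force the evader to depart early, and the same blocks \(q(jn + k) = j\) for \(k \in [n]\) that catch the evader when it arrives after its first arc. Your tighter departure window \(\{0, \ldots, i-1\}\), versus the paper's \(\{0, \ldots, n-1\}\), is a harmless refinement.
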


\begin{proof}
    Let again \(G = (V, A)\) with \(V = [n]\) be a directed graph.
    To each arc \((i, j) \in A\), assign a delay of \(\smash{\tau\bigl((i, j)\bigr)} \defas j n + 1\).
    First, query \(q(t) \defas t\) for each \(t \in [n]\).
    This forces the evader to leave their starting vertex at some time \(t \in \set{0, \ldots, n - 1}\).
    Then, query \(q(j n + k) \defas j\) for all \(j, k \in [n]\).
    As \(j, k \geq 1\) and \(k \leq n\), all queries happen at distinct times, and so \(q \colon [n^2 + n] \to V\) is a well-defined query function with \(\max\supp(q) = n^2 + n\).
    Consider an evader whose first traversed arc is \((i, j) \in A\).
    As they entered this arc at some time \(t \in \set{0, \ldots, n - 1}\), they arrive at vertex~\(j\) at time \(t + \smash{\tau\bigl((i, j)\bigr)} = j n + t + 1\).
    As \(k \defas t + 1 \in [n]\), the evader is caught by the query \(q(j n + k) = j\) on arrival, and so \((\tau, q)\) is a winning strategy for the pursuer.
\end{proof}

These strategies are of no use in the undirected setting, as the pursuer is forced to assign the same delay to both directions of travel.
Thus, even if all edges have distinct delays and the evader cannot wait, they may arrive at either end of their first-traversed edge at the same time.

\subsection{Our results}%
\label{sec:results}

\begin{table}
    \centering
    \begin{tabular}{l@{\hspace{2em}}c@{\hspace{2em}}c}
        \toprule
            The evader \ldots
            & cannot stop
            & can wait at vertices \\
        \midrule
        must start at \(t = 0\)
            & \(\calO(m^3)\) [Thm.\,\ref{thm:baseline-general}]
            & \(\calO(n m^3)\) [Thm.\,\ref{thm:waiting-general}] \\[1ex]
        can start early
            & \multicolumn{2}{c}{\(\calO(8^n)\) [Thm.\,\ref{thm:waiting-freestart-general}]} \\
        \qquad on trees
            & \multicolumn{2}{c}{\(\calO(4^n)\) [Thm.\,\ref{thm:waiting-freestart-trees}]} \\
        \bottomrule
    \end{tabular}
    \caption{Time required to catch an evader on an undirected graph with \(n\)~vertices and \(m\)~edges.}
    \label{tab:results}
\end{table}

We show that the pursuer has a winning strategy on every graph, even in the most difficult setting where the evader may wait at vertices and start at an unknown time in the past.
More precisely, we show in \Cref{sec:baseline} that all graphs of order~\(n\) and size~\(m\) are solvable in time~\(\calO(m^3)\), in \Cref{sec:waiting} that they are waiting-solvable in time~\(\calO(nm^3)\), and in \Cref{sec:waiting-freestart} that they are waiting-freestart-solvable in time~\(\calO(8^n)\).
\Cref{sec:waiting-freestart-trees} describes a refined winning strategy for trees with a duration in~\(\calO(4^n)\).
These results are summarized in \Cref{tab:results}.

Both polynomial-duration winning strategies are based on pruning short walks.
In the baseline setting, half of all walks of length three are hit after traversing their first edge, part of the remaining walks after using their second edge, and the rest after passing through the third edge.
The queries are organized according to a lexicographical ordering of the edges, and they occur at distinct times as we assign edge lengths from a generalized Sidon set (see \Cref{sec:prelims}).
If waiting at vertices is allowed, we extend this strategy by scaling all edge delays by \(n + 1\) and interleaving an \(n\)-fold repetition of the original queries with new ones that compel the evader to move, preventing them from accumulating too much waiting time.

The exponential-duration strategies in \Cref{sec:waiting-freestart} both follow a recursive construction: given a graph for which a winning strategy exists, we introduce new vertices and edges with delays larger than that strategy's duration.
Each new edge then serves as a fuse, allowing us to execute the winning sub-strategy before an evader could have traversed it completely.
The new winning strategy then alternates batches of queries between the new and old vertices.

\subsection{Related work}%
\label{sec:related}

Pursuit--evasion games on graphs have been the subject of sustained investigation since the late 1970s.
Variants are broadly distinguished by the capabilities of the pursuers and the evader: slow or fast movement along edges versus teleportation, known locations versus invisibility, forced movement versus the ability to pass, randomized versus deterministic strategies.
They are routinely studied on both simple and non-simple, directed and undirected, and finite and infinite graphs.
For each setting, natural questions involve pinpointing the number of pursuers needed to win on all graphs of a given class and, conversely, identifying the graphs that are winnable by a given number of pursuers.
In the following, we discuss the games most closely related to ours: each pursuer can query an arbitrary vertex of a finite and undirected graph per time step while the evader can only move along the edges of the graph but is invisible to the pursuer.
For a more comprehensive treatment, we refer to the recent book of \citelong{Bonato2022} and the bibliography of \citelong{Fomin2008}.

\paragraph{Hunter and rabbit.}

In academic publications, the hunter and rabbit game outlined in \Cref{sec:intro} was independently introduced by \citelong{Britnell2013} and \citelong{Haslegrave2014}.
It made an earlier appearance, though, as Problem~6 of the Grade~9 division of the 2000 Moscow Mathematical Olympiad~\cite{Fedorov2011}.
The task was to first devise a winning strategy for the evader on a \mbox{\(2\)-subdivided} claw graph, and then to identify the graphs on which the evader wins unless any edge is removed.
None of these sources phrase the problem in terms of hunting a rabbit: \citeauthor{Britnell2013} help a prince find a challenge-posing princess in a palace, \citeauthor{Haslegrave2014} considers a cat chasing a mouse, and the Olympiad puzzle imagines a video game in which the player shoots at shelters connected by tunnels.
The first appearance of a rabbit in the pursuit-evasion literature appears to be in the work of \citelong{Adler2003}.
The authors consider a different model, though, in which the hunter moves along the edges of a graph (or stays put) according to a randomized strategy, while the rabbit either obeys the same rules or may jump to any vertex.

Both \citelong{Britnell2013} and \citelong{Haslegrave2014} characterize the graphs on which the hunter has a winning strategy, which are precisely the trees that exclude the Olympiad's \mbox{\(2\)-subdivided} claw as a subgraph.
Both works further describe an optimal strategy for such graphs, whose duration is linear in the size of a subtree that excludes certain leaves.
In response to the characterization, \citelong{Abramovskaya2016} ask how many hunters, each probing one vertex per round, are needed to catch the rabbit on an arbitrary graph.
The authors show that their \emph{hunting number} is at most one more than the graph's pathwidth, and that this bound is tight for every pathwidth above one.
They further study grid graphs, for which the hunting number is about half the size of the smaller side, and trees, where it is in \(\Omega(\log n / \log\log n) \cap \calO(\log n)\).
\citelong{Gruslys2015} tighten the latter result to~\(\Theta(\log n)\) with a precise upper bound of \(\lceil (1/2) \log_2 n \rceil\), and \citelong{Bolkema2019} give the exact hunting number of the \(d\)-dimensional hypercube, which is in \(\Theta(2^d / \sqrt{d})\).
Recently, \citelong{BenAmeur2026} show that the problem of deciding whether the hunting number of a given graph is below some threshold is \NP-hard, which partially answers a longstanding question of \citeauthor{Abramovskaya2016} about this problem's complexity.
This already holds if the graph is bipartite and, separately, even if the search is limited to just two time steps.
It remains open whether the general problem is contained in~\NP{}; it may well be \PSPACE-complete.

\paragraph{Firefighting.}

In the \emph{firefighting game}, all vertices of a graph are initially on fire, and in every round a number of firefighters can extinguish one vertex each.
Afterwards, the fire spreads from burning vertices to adjacent ones, including to vertices that have been extinguished before.
The central question is whether the firefighters can completely extinguish the graph.
Interpreting the burning vertices as possible locations of an invisible evader yields the hunter and rabbit game with loops attached to every vertex.
This seemingly minor modification changes the problem substantially: already a single edge cannot be cleared by just one firefighter.
\citelong{Bolkema2019} carry over their result about the hunting number of hypercubes to the firefighting setting, which they phrase in terms of a deaf rabbit that is not scared by the shots.
\citelong{Bernshteyn2022} put the game into focus and identify the ``inspection number'' of firefighters needed to be one more than the smaller side length of a grid, unbounded for subcubic trees, and at most one more than a graph's pathwidth.
They also study the same quantity when the pursuer is restricted to monotone strategies that prevent any reignition, as well as in a topological setting where the firefighters must be able to extinguish some further subdivision of any given subdivision of a graph.
\citelong{Althoetmar2025} give the game the name adopted here and show that two firefighters can extinguish precisely the caterpillar graphs.
They further prove \NP-hardness of deciding whether a graph is winnable by at most~\(k\) firefighters, and that shortest winning strategies may require superpolynomial time in the size of the graph.

The firefighting game differs from the \emph{firefighter problem} attributed to \citelong{Hartnell1995}, in which the fire initially occupies only a subset of the vertices, non-burning vertices can be permanently protected, and the goal is to minimize the number of vertices that catch fire~\cite{Finbow2009,Wagner2021}.

\paragraph{Intermediate models.}

Letting the evader perform a walk on a non-simple graph interpolates between the hunter and rabbit game (no loops) and the firefighting game (loops on every vertex).
\citelong{BenAmeur2024} do so in the context of directed graphs and phrase the result as a variant of \emph{cops and robbers}.
Here, the ``cop number'' of pursuers needed is at most one more than a graph's pathwidth, is no larger than the size of a feedback vertex set, and is \NP-hard to compute even in the loopless setting.
\citelong{BenAmeur2026} give a forbidden-subgraph characterization of non-simple undirected graphs that are winnable by a single hunter.
It excludes four subgraphs with loops in addition to cycles and the \mbox{\(2\)-subdivided} claw already excluded for simple graphs.

\section{Preliminaries}%
\label{sec:prelims}

We write \(\Zn\) for the non-negative and~\(\Zp\) for the positive integers.
For \(n \in \Zp\), we define the range \([n] \defas \set{k \in \Zp \mid k \leq n}\) and the extended range \(\langle{n}\rangle \defas \set{0} \cup [n]\).
For a singleton set \(X = \set{x}\), we write \(\ast X \defas x\) for its element.
All graphs in the following are finite, undirected, and simple.
When a graph \(G = (V, E)\) is clear from the context, then we write \(n \defas |V|\) for its order and \(m \defas |E|\) for its size.
We further assume that vertices are labeled as \(V = [n]\).

The following notion is used to track evaders that can wait at vertices.

\begin{definition}[Lazy walk]
    Let \(G = (V, E)\) be a graph with edge delays \(d \colon E \to \Zp\).
    A \emph{lazy walk} \(W = (v_1, \ldots, v_k)\) \emph{with waiting times} \((\alpha_1, \ldots, \alpha_k)\) and \emph{start time}~\(t_0 \in \Z\) is a walk~\(W\) in~\(G\) that is located at vertex~\(v_i\) at all times~\(t \in \Z\) such that
    \[
        0 \leq t - t_0 - \sum_{j=1}^{i-1} \bigl( \alpha_j + d(\set{v_j, v_{j+1}}) \bigr) \leq \alpha_i.
    \]
\end{definition}

By the \emph{length} of a lazy walk~\(W\), we mean its topological length \(k - 1\).

For distinguishing travel times along distinct sets of edges, we will assign delays from sets with the following property.

\begin{definition}[Generalized \(B_h\)-sets]\label{def:B_h}
    A set of \(m\)~integers \(A = \set{a_1, \ldots, a_m} \subset \Zn\) is a \emph{(finite) \(B_{g,h}^r\)-set} if for every \(c, d \in \Zn^m\) with
    \[
        g \leq \lVert c \rVert_1, \lVert d \rVert_1 \leq h
        \qquad\text{and}\qquad
        \lVert c \rVert_\infty, \lVert d \rVert_\infty \leq r,
    \]
    we have that
    \[
        \sum_{i=1}^m c_i a_i = \sum_{i=1}^m d_i a_i \implies c = d.
    \]
\end{definition}

In other words, a \(B_{g,h}^r\)-set is a set for which any multi-subset with cardinality between \(g\) and~\(h\) and with multiplicity at most~\(r\) has a unique sum.

\begin{example}\label{exm:powers}
    For \(b \in \Z_{\geq 2}\), the powers \(P_b \defas \set{b^k \mid k \in \Zn}\) form a \(B_{0,b}^{b-1}\)-set but not a \(B_{0,b}^{b}\)-set.
    The former can be seen from the \(b\)-ary representation of sums involving fewer than~\(b\) powers of~\(b\), the latter follows from \(b \cdot b^k = 1 \cdot b^{k+1}\).
\end{example}

Notice that the \(B_{h,h}^h\)-sets are exactly the \emph{\(B_h\)-sets} (\(B_2\)-sets are also known as \emph{Sidon sets}), while the \(B_{0,m}^1\)-sets are precisely the \emph{subset-sum-distinct} (SSD) sets.
We only make use of \(B_{0,3}^2\)-sets in this work.
These are more restricted than \(B_3\)~sets in the sense that sums of less than three (possibly repeated) elements are considered for the uniqueness condition, but less restricted insofar as sums of the form~\(3 a_i\) are not considered.

The following is a straightforward way to obtain a \(B_{g,h}^r\)-set from a \(B_h\)-set for any \(g, r \leq h\).

\begin{restatable}{lemma}{bhshifting}\label{lem:B_h-shifting}
    Let \(A\)~be a \(B_h\)-set for some \(h \in \Zp\).
    Then, \(\set{a - \min A \mid a \in A} \setminus \set{0}\) is a \(B_{g,h}^r\)-set of size~\(|A| - 1\) for any \(g, r \leq h\).
\end{restatable}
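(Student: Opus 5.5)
Write \(A = \set{a_1, \ldots, a_m}\) with \(a_1 = \min A\), and let \(A' \defas \set{a_i - a_1 \mid 2 \leq i \leq m}\). The plan is to pull any coincidence of sums in~\(A'\) back to a coincidence of sums of exactly~\(h\) elements of~\(A\). The \(B_h\) property of~\(A\) then forces the two coefficient vectors to agree.

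First, the size claim. A \(B_h\)-set with \(h \geq 1\) consists of distinct integers, since it is a set. Hence the shifted values \(a_i - a_1\) for \(i \geq 2\) are pairwise distinct and positive. Removing~\(0\) deletes exactly the shifted copy of~\(a_1\), so \(|A'| = |A| - 1\). The elements are non-negative, so \(A' \subset \Zn\) as \Cref{def:B_h} requires.

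Next, uniqueness of sums. Take \(c, d \in \Zn^{m-1}\), indexed by \(i = 2, \ldots, m\), with \(g \leq \lVert c\rVert_1, \lVert d\rVert_1 \leq h\), with \(\lVert c\rVert_\infty, \lVert d\rVert_\infty \leq r\), and with
\[
    \sum_{i \geq 2} c_i (a_i - a_1) = \sum_{i \geq 2} d_i (a_i - a_1).
\]
Pad each vector by a coefficient on~\(a_1\): set \(\hat c_1 \defas h - \lVert c\rVert_1 \geq 0\) and \(\hat c_i \defas c_i\) for \(i \geq 2\), and define \(\hat d\) analogously. Then \(\lVert \hat c\rVert_1 = \lVert \hat d\rVert_1 = h\). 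Adding \(h a_1\) to both sides of the displayed equality gives
\[
    \sum_{i=1}^m \hat c_i a_i = \sum_{i=1}^m \hat d_i a_i .
\]
This is an equality between two multisets of exactly~\(h\) elements of~\(A\). Their multiplicities are automatically at most~\(h\), which matches the \(B_{h,h}^h\) description of \(B_h\)-sets given above. The \(B_h\) property therefore yields \(\hat c = \hat d\), and in particular \(c = d\).

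Notice that this argument never uses the lower bound~\(g\) or the multiplicity bound~\(r\) beyond the requirement \(g, r \leq h\). That requirement merely makes the hypotheses a subset of the case \(g = 0\), \(r = h\). So \(A'\) is in fact a \(B_{0,h}^h\)-set, and therefore a \(B_{g,h}^r\)-set for every \(g, r \leq h\).

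The only step needing care is the padding. We need \(\hat c_1 \geq 0\), which holds because \(\lVert c\rVert_1 \leq h\). We also need the padding to be harmless, which holds because \(\hat c_1 a_1 + \sum_{i \geq 2} c_i a_i = h a_1 + \sum_{i \geq 2} c_i (a_i - a_1)\). I do not expect a genuine obstacle here.
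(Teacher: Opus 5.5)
Your proof is correct and is essentially the paper's argument. The paper first observes that translating by \(\min A\) preserves the \(B_h\) property, then pads with coefficient \(h - \lVert c \rVert_1\) on the translated minimum \(0\); you merge these two steps by padding on \(a_1 = \min A\) and adding \(h a_1\) to both sides directly, and you also check the size \(|A| - 1\) explicitly, which the paper leaves implicit.
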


\begin{proof}
    See \Cref{sec:deferred}.
\end{proof}

The extremal properties of \(B_h\)~sets, specifically the ratio between the cardinality and the diameter of a set, are well-studied in the literature.
We will make use of a classic result.

\begin{theorem}[\citelong{Bose1962}]
    For any prime power~\(m\), there exists a \(B_h\)-set~\(A\) with \(|A| = m\) and \(\max A < m^h\).
\end{theorem}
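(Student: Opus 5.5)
We would follow the classical Bose--Chowla construction over a finite field. Let \(m = q\) be a prime power, let \(\mathbb{F}_q \subset \mathbb{F}_{q^h}\) be the finite fields of orders \(q\) and \(q^h\), and let \(\theta\) be a primitive element of \(\mathbb{F}_{q^h}\), so that \(\theta\) generates the cyclic group \(\mathbb{F}_{q^h}^{\ast}\) of order \(q^h - 1\). Since \(\mathbb{F}_{q^h} = \mathbb{F}_q(\theta)\), the minimal polynomial of \(\theta\) over \(\mathbb{F}_q\) has degree exactly~\(h\). For each \(x \in \mathbb{F}_q\), the element \(\theta + x\) is nonzero, because \(\theta \notin \mathbb{F}_q\) when \(h \geq 2\). (The case \(h = 1\) is trivial: take \(A = \set{0, \ldots, m-1}\).) Hence there is a unique exponent \(a_x \in \set{0, \ldots, q^h - 2}\) with \(\theta^{a_x} = \theta + x\). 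We would set \(A \defas \set{a_x \mid x \in \mathbb{F}_q}\). The \(a_x\) are pairwise distinct because the \(\theta + x\) are, so \(|A| = q = m\), and \(\max A \leq q^h - 2 < m^h\).

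It remains to verify the \(B_h\) property, and we would in fact prove it modulo \(q^h - 1\), which is stronger. Take \(c, d \in \Zn^{\mathbb{F}_q}\) with \(\lVert c \rVert_1 = \lVert d \rVert_1 = h\) and \(\sum_x c_x a_x = \sum_x d_x a_x\). Exponentiating \(\theta\) gives
\[
    \prod_{x \in \mathbb{F}_q} (\theta + x)^{c_x} = \prod_{x \in \mathbb{F}_q} (\theta + x)^{d_x}.
\]
Now define the polynomials \(P \defas \prod_x (T + x)^{c_x}\) and \(Q \defas \prod_x (T + x)^{d_x}\) in \(\mathbb{F}_q[T]\). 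Both are monic of degree exactly~\(h\), since the multiplicities sum to~\(h\). Their difference \(P - Q\) therefore has degree less than~\(h\) and vanishes at~\(\theta\). Because the minimal polynomial of \(\theta\) has degree~\(h\), this forces \(P = Q\).

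Unique factorization in \(\mathbb{F}_q[T]\) then applies. The linear factors \(T + x\) are pairwise non-associate irreducibles, so \(P = Q\) gives \(c_x = d_x\) for all \(x\), i.e., \(c = d\). This holds in particular for sums over the integers, so \(A\) is a \(B_h\)-set.

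The main point needing care is the step where the degree drops. It relies on \(\lVert c \rVert_1\) and \(\lVert d \rVert_1\) both being exactly~\(h\), so that the leading terms cancel; this is precisely what the \(B_h = B_{h,h}^h\) condition provides. The existence of a primitive element, i.e.\ cyclicity of \(\mathbb{F}_{q^h}^{\ast}\), we would take as standard.
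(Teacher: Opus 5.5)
Your proof is correct. It is the classical Bose--Chowla argument: take exponents \(a_x\) with \(\theta^{a_x} = \theta + x\) for a primitive element \(\theta\) of the field of order \(q^h\), then conclude with the degree-cancellation and unique-factorization step. The paper states the theorem only as a citation and gives no proof, so this is the argument behind it; your separate treatment of \(h = 1\), where \(\theta\) lies in the base field and the construction would fail, is also appropriate.
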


Together with \Cref{lem:B_h-shifting} and the Bertrand--Chebyshev theorem, which guarantees that prime powers are less than a factor of two apart, we obtain the following.

\begin{lemma}\label{lem:B_3}
    For any~\(m \in \Zp\), there exists a \(B_{0,3}^2\)-set~\(A\) with \(|A| = m\) and \(\max A \in \calO(m^3)\).
\end{lemma}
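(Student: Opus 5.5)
Lemma~\ref{lem:B_3} follows by chaining the theorem of \citeauthor{Bose1962} with Lemma~\ref{lem:B_h-shifting} for \(h = 3\), and using Bertrand--Chebyshev to handle values of \(m\) that are not prime powers.

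Given \(m \in \Zp\), pick a prime \(p\) with \(m + 1 \leq p \leq 2(m + 1)\); such a prime exists by the Bertrand--Chebyshev theorem. In particular \(p\) is a prime power, so the theorem of \citeauthor{Bose1962} with \(h = 3\) yields a \(B_3\)-set \(A'\) with \(|A'| = p\) and \(\max A' < p^3 \leq 8(m+1)^3\). We may assume \(A' \subset \Zn\), since the construction produces residues.

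Applying Lemma~\ref{lem:B_h-shifting} with \(h = 3\), \(g = 0\) and \(r = 2\) (both at most \(h\)) shows that
\[
    A'' \defas \set{a - \min A' \mid a \in A'} \setminus \set{0}
\]
is a \(B_{0,3}^2\)-set of size \(p - 1 \geq m\). All its elements satisfy \(0 < a - \min A' \leq \max A' < 8(m+1)^3\).

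Now let \(A \subseteq A''\) be any subset with exactly \(m\) elements. The \(B_{g,h}^r\) property passes to subsets: coefficient vectors indexed by \(A\) extend by zeros to coefficient vectors on \(A''\). This extension leaves the \(1\)-norm and \(\infty\)-norm unchanged, and it is injective, so equal sums over \(A\) still force equal vectors. Hence \(A\) is a \(B_{0,3}^2\)-set with \(|A| = m\) and \(\max A < 8(m+1)^3 \in \calO(m^3)\).

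Only two points need checking. First, Bertrand's postulate gives a \emph{prime} in \((m+1, 2(m+1)]\), which is stronger than the prime-power spacing the paper quotes, so the factor-\(2\) loss is justified. Second, the shift lemma is being used with \(g = 0\), which it allows because it only requires \(g, r \leq h\). Neither step presents a real obstacle.
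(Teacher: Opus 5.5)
Your proof is correct and follows essentially the paper's route. The paper likewise combines the Bose--Chowla theorem with Lemma~\ref{lem:B_h-shifting} and the Bertrand--Chebyshev theorem, but leaves implicit the two details you spell out: choosing a prime (hence prime power) \(p \geq m+1\) within a factor of two, and noting that the \(B_{0,3}^2\) property passes to subsets, which lets you trim the shifted set to exactly \(m\) elements.
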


\section{A strategy for undirected graphs}%
\label{sec:baseline}

\begin{figure}
    \tikzset{node distance=3cm}
    \centering
    \begin{subfigure}{0.3\textwidth}
        \centering
        \begin{tikzpicture}
    \node[vertex] (1) {\(1\)};
    \node[vertex, right of=1] (2) {\(2\)};
    \node[vertex, below of=1] (3) {\(3\)};
    \node[vertex, right of=3] (4) {\(4\)};

    \draw[walk, walk1] (2) to (1);
    \draw[walk, walk2] (3) to (1);
    \draw[walk, walk3] (4) to (1);
    \draw[walk, walk4] (3) to (2);
    \draw[walk, walk5] (4) to (2);
    \draw[walk, walk6] (4) to (3);
\end{tikzpicture}%
        \subcaption{Covered length-\(1\) walks.}
    \end{subfigure}
    \begin{subfigure}{0.3\textwidth}
        \centering
        \begin{tikzpicture}
    \node[vertex] (1) {\(1\)};
    \node[vertex, right of=1] (2) {\(2\)};
    \node[vertex, below of=1] (3) {\(3\)};
    \node[vertex, right of=3] (4) {\(4\)};

    \draw[walk, walk1] 
        (1.north east) to
        (2.north east) to
        (4.north east);
    \draw[walk, walk2] 
        (1.south west) to
        (3.south west) to
        (4.south west);
    \draw[walk, walk3] 
        (2.east) to
        ($(4.south east)!-0.5!(4.center)$) to
        (3.south);

    \draw[walk, walk4] 
        ($(1.south east)!0.7!(1.east)$) to
        ($(4.north east)!0.5!(4.center)$) to
        ($(2.south)!0.5!(2.south east)$);
    \draw[walk, walk5] 
        ($(1.south east)!0.7!(1.south)$) to
        ($(4.south west)!0.5!(4.center)$) to
        ($(3.east)!0.5!(3.south east)$);

    \draw[walk, walk6] 
        ($(1.east)$) to
        ($(2.west)!0.25!(2.center)$) to
        ($(3.north east)!0.75!(3.north)$);
    \draw[walk, walk7] 
        ($(2.south west)!0.25!(2.south)$) to
        ($(3.east)!0.75!(3.center)$) to
        ($(4.west)$);
    \draw[walk, walk8] 
        (1) to
        ($(3.center)!0.25!(3.north)$) to
        ($(2.south west)!0.25!(2.west)$);
\end{tikzpicture}%
        \subcaption{Covered length-\(2\) walks.}
    \end{subfigure}
    \begin{subfigure}{0.3\textwidth}
        \centering
        \begin{tikzpicture}
    \node[vertex] (1) {\(1\)};
    \node[vertex, right of=1] (2) {\(2\)};
    \node[vertex, below of=1] (3) {\(3\)};
    \node[vertex, right of=3] (4) {\(4\)};

    \draw[walk, walk1] 
        (2.south) to
        (3.south) to
        (3.south west) to
        (1.north west) to
        (2.north west);
    \draw[walk, walk2] 
        (2.south west) to
        (3.north east) to
        (1.south east) to
        (4.north west);
    \draw[walk, walk3] 
        (2.south) to
        (4.north) to
        (1.east) to
        (2.west);
    \draw[walk, walk4] 
        (2.south east) to
        (4.south east) to
        (4.south) to
        (1.south) to
        (3.north);
\end{tikzpicture}%
        \subcaption{Covered length-\(3\) walks from~\(2\).}
    \end{subfigure}
    \caption{%
        The strategy of \Cref{thm:baseline-general} on the complete graph~\(K_4\).
        It covers all walks of length~\(1\) with decreasing vertices, all walks of length~\(2\) with non-decreasing edges according to a lexicographical order, and all remaining walks of length~\(3\).
        Covered walks that go back-and-forth along an edge or that start in vertex~\(3\) with length~\(3\) are not shown.
        Delays are chosen such that all these walks have distinct lengths.
    }
    \label{fig:baseline-general}
\end{figure}

Our first main result states that the pursuer has a winning strategy of polynomial duration on any undirected graph.
In the proof, we assume an arbitrary ordering of the vertices, and we derive a lexicographical ordering of the edges from it.
We then ensure via queries that in any evasive walk, the first two vertices must be visited in increasing order, and the first two edges must be visited in decreasing order.
This guarantees that any walk of length three that uses a given multiset of edges has a unique end vertex, which can be queried to end the walk.
See \Cref{fig:baseline-general} for an example.
To make all necessary queries occur at distinct times, delays are chosen from a \smash{\(B_{0,3}^2\)}-set, ensuring all relevant combinations have distinct sums.

\begin{figure}
    \centering
    \begin{subfigure}{0.24\textwidth}
        \centering
        \begin{tikzpicture}[
            node distance=20mm,
        ]
            \node[vertex, hit] (1) {\(1\)};
            \node[vertex, right of=1] (2) {\(2\)};
            \draw[walk] (2) to node[midway, below] {\(e_i\)} (1);
            \path[walk] (1) to node[midway, below] {\phantom{\(e_j\)}} (2);
        \end{tikzpicture}
        \subcaption{Case~\(1\) for \(t = \tau_i\).}
    \end{subfigure}
    \begin{subfigure}{0.24\textwidth}
        \centering
        \begin{tikzpicture}[
            node distance=20mm,
        ]
            \node[vertex, hit] (1) {\(1\)};
            \node[vertex, right of=1] (2) {\(2\)};
            \draw[walk] (1.north east) to node[midway, below] {\(e_i\)} (2.north west) to (2.south west) to (1.south east);
            \path[walk] (1) to node[midway, below] {\phantom{\(e_j\)}} (2);
        \end{tikzpicture}
        \subcaption{Case~\(1\) for \(t = 2 \tau_i\).}
    \end{subfigure}
    \begin{subfigure}{0.24\textwidth}
        \centering
        \begin{tikzpicture}[
            node distance=14mm,
        ]
            \node[vertex] (1) {\(1\)};
            \node[vertex, right of=1] (2) {\(2\)};
            \node[vertex, hit, right of=2] (3) {\(3\)};
            \draw[walk] (1) to node[midway, below] {\(e_i\)} (2) to node[midway, below] {\(e_j\)} (3);
        \end{tikzpicture}
        \subcaption{Case~\(2\).}
    \end{subfigure}
    \begin{subfigure}{0.24\textwidth}
        \centering
        \begin{tikzpicture}[
            node distance=14mm,
        ]
            \node[vertex] (2) {\(2\)};
            \node[vertex, hit, right of=2] (3) {\(3\)};
            \node[vertex, right of=3] (1) {\(1\)};
            \draw[partialwalk] (2.north east) to node[midway, below] {\(e_i\)} (3.north west);
            \draw[walk] (3.north east) to node[midway, below] {\(e_j\)} (1.north west) to (1.south west) to (3.south east);
            \path[walk] (2) to node[midway, below] {\phantom{\(e_j\)}} (3);
        \end{tikzpicture}
        \subcaption{Case~\(3\).}
    \end{subfigure}
    \begin{subfigure}{0.48\textwidth}
        \centering
        \begin{tikzpicture}[
                node distance=14mm,
            ]
            \node[vertex] (2) {\(2\)};
            \node[vertex, right of=2] (3) {\(3\)};
            \node[vertex, right of=3] (1) {\(1\)};
            \node[vertex, hit, right of=1] (v) {\(v\)};
            \draw[walk] (2) to node[midway, below] {\(e_i\)} (3) to node[midway, below] {\(e_j\)} (1) to node[midway, below] {\(e_k\)} (v);
        \end{tikzpicture}
        \subcaption{Case~\(4\) for \(e_i \cap e_k = \emptyset\).}
    \end{subfigure}
    \begin{subfigure}{0.48\textwidth}
        \centering
        \begin{tikzpicture}[
            node distance=14mm,
        ]
            \node[vertex, hit] (2) {\(2\)};
            \node[vertex, right of=2] (3) {\(3\)};
            \node[vertex, right of=3] (1) {\(1\)};
            \draw[partialwalk] (2) to node[midway, below] {\(e_i\)} (3);
            \draw[walk] (3) to node[midway, below] {\(e_j\)} (1) to[bend right] node[midway, above] {\(e_k\)} (2);
        \end{tikzpicture}
        \subcaption{Case~\(4\) for \(e_i \cap e_k = \set{v}\).}
    \end{subfigure}
    \caption{%
        Examples of walks covered by the four cases of the query function~\(q\) in the proof of \Cref{thm:baseline-general}.
        The outlined vertex is the one queried.
        Notice that the edge order is \(\set{2, 1} \prec \set{3, 1} \prec \set{3, 2}\).
    }
    \label{fig:baseline-general-q}
\end{figure}

\begin{theorem}\label{thm:baseline-general}
    All graphs are solvable in time~\smash{\(\calO\bigl(m^3\bigr)\)}.
\end{theorem}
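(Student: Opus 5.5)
The plan is to build the strategy described informally before \Cref{fig:baseline-general}. Order the edges lexicographically as \(e_1 \prec \dots \prec e_m\), induced by the vertex labels \(V = [n]\). By \Cref{lem:B_3}, choose a \(B_{0,3}^2\)-set \(\set{\tau_1, \ldots, \tau_m}\) with \(\max_i \tau_i \in \calO(m^3)\), and set \(\tau(e_i) \defas \tau_i\). A walk with edge-multiset given by \(c \in \Zn^m\) reaches the end of its prefixes at times \(\sum_i c_i \tau_i\). All queries will be placed at times of this form with \(\lVert c\rVert_1 \le 3\) and \(\lVert c\rVert_\infty \le 2\). Every walk of total delay at least the duration has length at least one, and I will show it is hit within its first three edges. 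The duration is then at most \(3\max_i\tau_i \in \calO(m^3)\).

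The queries come in four families, matching \Cref{fig:baseline-general-q}.
\emph{Case~1:} at times \(\tau_i\) and \(2\tau_i\), query \(\min e_i\). This catches every walk whose first edge \(e_i\) is traversed downwards, and every walk that goes from \(\min e_i\) to \(\max e_i\) and straight back. A survivor therefore traverses its first edge \(e_i\) upwards and its second edge \(e_j\) satisfies \(e_j \ne e_i\).
\emph{Case~2:} for \(i<j\) with \(e_i\cap e_j\ne\emptyset\), at time \(\tau_i+\tau_j\) query the endpoint of \(e_j\) not equal to \(\max e_i\). Since the first edge is the \(\prec\)-smaller one and is traversed upwards, the middle vertex is \(\max e_i\) and the end vertex is determined. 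Survivors thus satisfy \(e_j \prec e_i\).
\emph{Case~3:} for \(e_j \prec e_i\) sharing a vertex, at time \(\tau_i + 2\tau_j\) query the vertex reached after the second edge, namely the shared vertex \(\max e_i\). This catches back-and-forth moves on the second edge.
\emph{Case~4:} for each multiset \(\set{e_i, e_j, e_k}\) with \(e_k \ne e_j \prec e_i\), at time \(\tau_i+\tau_j+\tau_k\) query the end vertex of the corresponding walk.

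Well-definedness of \(q\) comes from the \(B_{0,3}^2\) property: distinct coefficient vectors give distinct times. So it suffices to check that no vector is used by two families with different targets. Case~1 uses \(\mathbf{1}_i\) and \(2\mathbf{1}_i\). Case~2 uses \(\mathbf{1}_i+\mathbf{1}_j\). Case~3 uses \(\mathbf{1}_a+2\mathbf{1}_b\) with the doubled edge \(\prec\)-smaller. Case~4 uses either three distinct edges or \(2\mathbf{1}_i+\mathbf{1}_j\) with \(e_j\prec e_i\) (when \(k=i\)), where the doubled edge is \(\prec\)-larger. So Cases~3 and~4 never collide, and \(3\mathbf{1}_i\) never occurs because \(e_k\ne e_j\ne e_i\).

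The main obstacle is Case~4: showing that the multiset of three edges determines a unique end vertex over all surviving walks realizing it. The constraints on such a walk are that it has consecutive adjacent edges \(e_i,e_j,e_k\), that \(e_j\prec e_i\), that \(e_i\) is traversed upwards, and that \(e_k\ne e_j\). I would first fix the first edge as the \(\prec\)-maximum of the three whenever that is forced, and split by the shape of the three edges: a path, a star, a triangle, or two distinct edges with one repeated. In each shape I would list the admissible orderings and check, using the lexicographic order (\(\set{u,v}\) with \(u<v\) compared first by \(u\), then by \(v\)), that they all end at the same vertex. A typical instance: in a triangle, the \(\prec\)-largest edge is \(\set{b,c}\) with \(a<b<c\). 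An upward traversal of it ends at \(c\), and the remaining order is forced. If some shape admits two orderings with different endpoints, my fallback is to strengthen Case~2 with an extra query that excludes one of them.
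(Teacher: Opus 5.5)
Your construction is the paper's: the same \(B_{0,3}^2\) delays, the same four query families, and the same collision check via coefficient vectors. Your lexicographic order compares minima first, whereas the paper compares maxima first. This is harmless. For distinct edges \(\{a,b\}\) and \(\{b,c\}\) sharing a vertex, both orders give \(\{a,b\}\prec\{b,c\}\) iff \(a<c\), and only adjacent edges are ever compared. Hence a prefix \((a,b,c)\) survives Cases~1 and~2 only if \(c<a<b\), and the Case~4 check you postpone succeeds without any fallback. A surviving walk uses three distinct edges: a walk \(e_i,e_j,e_i\) with \(e_j\neq e_i\) does not exist, and \(e_k=e_j\) is handled by Case~3. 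So the three edges form a triangle or a path. On a triangle \(x<y<z\) the only candidate is \((y,z,x,y)\). On a path \(p_1p_2p_3p_4\) the two directions survive only if \(p_3<p_1<p_2\) and \(p_2<p_4<p_3\), respectively, which together would give \(p_3<p_1<p_2<p_4<p_3\). The paper reaches the same conclusion by excluding each nontrivial permutation of the indices \((i,j,k)\).

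What is genuinely missing is the last step. A strategy wins if it hits every walk of total delay at least \(\max\supp(q)\). Your remark that such walks have length at least one does not suffice. Surviving walks of length one or two, such as \((a,b)\) with \(a<b\) or \((a,b,c)\) with \(c<a<b\), are never hit, so you must show that their delay falls short of the duration. The paper argues as follows. Let \(\ell\) index the unique largest delay. An unhit walk has length at most two, and if it has length two it uses two distinct edges, since back-and-forth is caught by Case~1. So its delay is strictly less than \(2\tau_\ell\). Since Case~1 places a query at time \(2\tau_\ell\), the duration exceeds the delay of every unhit walk. You should also treat \(E=\emptyset\) separately, since your construction then makes no queries at all.
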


\begin{proof}
    Let \(G = (V, E)\) be a graph.
    We may assume that \(E \neq \emptyset\); otherwise, the pursuer wins by definition if \(V = \emptyset\), or else by querying an arbitrary vertex at time~\(1\), since every walk in~\(G\) has total delay~\(0\).
    Order the edges lexicographically with the greater vertex written first:
    for \(\{u^+, u^-\}, \{v^+, v^-\} \in E\) with \(u^+ > u^-\) and \(v^+ > v^-\), set
    \[
        \{u^+, u^-\} \prec \{v^+, v^-\}
        \quad\Longleftrightarrow\quad
        u^+ < v^+ \lor (u^+ = v^+ \land u^- < v^-).
    \]
    Let further \(\pi \colon [m] \to E\) be the unique permutation such that \(\pi(i) \prec \pi(j)\) if and only if \(i < j\), and denote the \(i\)-th edge by \(e_i \defas \pi(i)\) for all \(i \in [m]\).
    To obtain the edge delays, let \(T = \set{\tau_1, \ldots, \tau_m}\) be a \(B_{0,3}^2\)-set with \(\max T \in \calO(m^3)\), whose existence is guaranteed by~\Cref{lem:B_3}, and assign \(\tau(e_i) \defas \tau_i\) for all \(i \in [m]\).
    Consider finally the query function \(q \colon \Zp \rightharpoonup V\) given by
    \[
        q(t) \defas \begin{cases}
            \min e_i, & \exists i \in [m] \colon t \in \set{\tau_i, 2\tau_i}, \\
            v, & \exists i < j \in [m], v \in V \colon t = \tau_i + \tau_j \land e_j \setminus e_i = \set{v}, \\
            v, & \exists i > j \in [m], v \in V \colon t = \tau_i + 2 \tau_j \land e_i \cap e_j = \set{v}, \\
            v, & \begin{aligned}[t]
                &\exists k \neq i > j \in [m], v \in V \colon \\
                &\quad
                t = \tau_i + \tau_j + \tau_k
                \land e_i \cap e_j = \set{\max e_i} \not\subseteq e_k
                \land e_k \setminus e_j = \set{v}.
            \end{aligned}
        \end{cases}
    \]
    The cases are illustrated in \Cref{fig:baseline-general-q}.

    To see that \(q\)~is well-defined, first observe that the constraints on the time~\(t\) and the fact that \(T\)~is a \(B_{0,3}^2\)-set (\Cref{def:B_h}) render all four cases disjoint.
    Specifically, the fourth case is disjoint from the third case as the condition of \(e_k \setminus e_j = \set{v}\) implies \(k \neq j\), so that \(\tau_i + \tau_j + \tau_k \neq \tau_i + 2 \tau_j\).
    It remains to show that the choice of~\(i\) in the first case and the choices of~\(v\) in the last three cases are unique.
    For the first case this follows again from \Cref{def:B_h}, as \(\tau_i\), \(\tau_j\), \(2 \tau_i\), and \(2 \tau_j\) are all distinct for any \(i \neq j \in [m]\).
    To see uniqueness in the second case, let \(i < j \in [m]\) and \(v \in V\) with \(t = \tau_i + \tau_j\) and \(e_j \setminus e_i = \set{v}\).
    Suppose that there are \(i' < j' \in [m]\) and \(v' \in V\) with \(t = \tau_{i'} + \tau_{j'}\) and \(e_{j'} \setminus e_{i'} = \set{v'}\) but \(v' \neq v\).
    From \Cref{def:B_h} we obtain \(\set{i, j} = \set{i', j'}\), and from \(i < j\) and \(i' < j'\) it follows that \(i = i'\) and \(j = j'\).
    But then, \(\set{v'} = e_{j'} \setminus e_{i'} = e_j \setminus e_i = \set{v}\) contradicts \(v' \neq v\).
    For the third case, let \(i > j \in [m]\) and \(v \in V\) with \(t = \tau_i + 2 \tau_j\) and \(e_i \cap e_j = \set{v}\).
    Suppose that there are \(i' > j' \in [m]\) and \(v' \in V\) with \(t = \tau_{i'} + 2 \tau_{j'}\) and \(e_{i'} \cap e_{j'} = \set{v'}\) but \(v' \neq v\).
    Here, \Cref{def:B_h} directly gives us \(i' = i\) and \(j' = j\), so that \(\set{v'} = e_{i'} \cap e_{j'} = e_i \cap e_j = \set{v}\) again contradicts \(v' \neq v\).

    Uniqueness is least obvious in the fourth case.
    Let \(i > j, k \in [m]\) and \(v \in V\) such that \(t = \tau_i + \tau_j + \tau_k\), \(e_i \cap e_j = \set{\max e_i} \not\subseteq e_k\), and \(e_k \setminus e_j = \set{v}\).
    Suppose towards a contradiction that there are \(i' > j', k' \in [m]\) and \(v' \in V\) such that \(t = \tau_{i'} + \tau_{j'} + \tau_{k'}\), \(e_{i'} \cap e_{j'} = \set{\max e_{i'}}\), and \(e_{k'} \setminus e_{j'} = \set{v'}\), but \(v' \neq v\).
    Note that due to \(k \neq i > j\) and \(e_k \setminus e_j = \set{v}\), we have that \(i\), \(j\), and \(k\) are pairwise distinct.
    By analogy, so are \(i'\), \(j'\), and \(k'\).
    From this and \(\tau_i + \tau_j + \tau_k = t = \tau_{i'} + \tau_{j'} + \tau_{k'}\), \Cref{def:B_h} gives us \(\set{i', j', k'} = \set{i, j, k}\).
    In the following we show that further \((i', j', k') = (i, j, k)\).
    This then implies \(\set{v'} = e_{k'} \setminus e_{j'} = e_k \setminus e_j = \set{v}\), contradicting \(v' \neq v\).

    First suppose towards a contradiction that \(i' = j\).
    As \(j' = i\) would imply \(i' = j < i = j'\), contradicting \(i' > j'\), it must be that \((i', j', k') = (j, k, i)\).
    From
    \[
        e_i \cap e_j = \set{\max e_i},
        \quad
        e_j \cap e_k = e_{i'} \cap e_{j'} = \set{\max e_{i'}} = \set{\max e_j},
        \quad\text{and}\quad
        e_i \setminus e_k = e_{k'} \setminus e_{j'} = \set{v'},
    \]
    it follows that \(\set{e_i, e_j, e_k}\) induces a triangle with vertices \(\set{\max e_i, \max e_j, v}\) in~\(G\): the third vertex can be derived from \(e_k \setminus e_j = \set{v}\).
    In particular, this implies \(\max e_i \neq \max e_j\).
    If \(\max e_i < \max e_j\), then \(e_i \prec e_j\) would contradict \(i > j\).
    Thus, it must be that \(\max e_i > \max e_j\).
    But from \(\max e_j \neq \max e_i \in e_i \cap e_j\) we further have \(e_j = \set{\max e_i, \max e_j}\), and by extension \(\max e_i = \min e_j < \max e_j\), a contradiction.

    Next suppose that \(i' = k\).
    It could be that \((i', j', k') = (k, i, j)\) or that \((i', j', k') = (k, j, i)\).
    Suppose the former, then we have that
    \[
        e_i \cap e_j = \set{\max e_i},
        \quad
        e_k \cap e_i = e_{i'} \cap e_{j'} = \set{\max e_{i'}} = \set{\max e_k},
        \quad\text{and}\quad
        e_k \setminus e_j = \set{v}.
    \]
    Thus, \(\set{e_i, e_j, e_k}\) again induces a triangle in~\(G\), this time with vertices \(\set{\max e_i, \max e_k, u}\) where \(\set{u} = e_j \cap e_k\).
    Since \(\max e_i \in e_i \cap e_j\), \(u \in e_j \cap e_k\), and \(\max e_k \in e_k\), we have that \(e_j = \set{\max e_i, u}\) and \(e_k = \set{u, \max e_k}\), and thus \(e_i = \set{\max e_k, \max e_i}\).
    It follows that \(\max e_k = \min e_i < \max e_i\).
    But this implies \(e_k \prec e_i\), contradicting \(k = i' > j' = i\).

    Consider now the case of \((i', j', k') = (k, j, i)\), which is the remaining case assuming \(i' = k\).
    We find that
    \[
        e_i \cap e_j = \set{\max e_i}
        \quad\text{and}\quad
        e_k \cap e_j = e_{i'} \cap e_{j'} = \set{\max e_{i'}} = \set{\max e_k},
    \]
    as well as
    \[
        e_k \setminus e_j = \set{v}
        \neq
        \set{v'} = e_{k'} \setminus e_{j'} = e_i \setminus e_j,
    \]
    so that \(\set{e_i, e_j, e_k}\) induces the path \((v', \max e_i, \max e_k, v)\) in~\(G\).
    In particular, we have that \(e_j = \set{\max e_i, \max e_k}\).
    If \(\max e_k > \max e_i\), then \(\max e_j = \max e_k > \max e_i\) and thus \(e_i \prec e_j\), contradicting \(i > j\).
    If \(\max e_i > \max e_k\), then \(\max e_j = \max e_i > \max e_k\) and thus \(e_k \prec e_j\), contradicting \(k = i' > j' = j\).

    We now have that \(i' = i\), so suppose that \(j' = k\), implying \((i', j', k') = (i, k, j)\).
    In this case, we have
    \[
        e_i \cap e_k = e_{i'} \cap e_{j'} = \set{\max e_{i'}} = \set{\max e_i},
    \]
    which contradicts \(\set{\max e_i} \not\subseteq e_k\).
    This concludes the argument that \((i', j', k') = (i, j, k)\) and thus \(v' = v\).
    We thus established that \(q\)~is a well-defined partial function.

    We next show that any walk of length at least three is hit by a query.
    Let \(P = (a, b, c, d)\) be a walk in~\(G\) along the edges
    \[
        e_i = \set{a, b},
        \quad
        e_j = \set{b, c},
        \quad\text{and}\quad
        e_k = \set{c, d},
    \]
    for some \(i, j, k \in [m]\).
    Suppose towards a contradiction that neither
    \[
        q(\tau_i) = b,
        \quad\text{nor}\quad
        q(\tau_i + \tau_j) = c,
        \quad\text{nor}\quad
        q( \tau_i + \tau_j + \tau_k) = d.
    \]
    Then, we have that \(a < b\), as otherwise \(q(\tau_i) = \min e_i = b\).
    We further have that \(i \neq j\), as otherwise \(q(\tau_i + \tau_j) = q(2 \tau_i) = \min e_i = a = c\).
    Additionally, we know that \(i > j\).
    To see this, observe that \(|e_j \setminus e_i| = 0\) would imply \(i = j\), while \(|e_j \setminus e_i| = 2\) contradicts \(b \in e_i \cap e_j\).
    We have thus \(|e_j \setminus e_i| = 1\) and specifically \(e_j \setminus e_i = \set{c}\).
    But then, \(i < j\) would imply \({q(\tau_i + \tau_j)} = c\) by the second case of~\(q\).
    We thus established that \(a < b\) and \(i > j\), which together imply that \(e_i \cap e_j = \set{b} = \set{\max e_i}\).
    Consider now the last vertex~\(d\).
    As \(e_k = \set{c, d}\), we know that \(d \neq c\).
    If \(d = b\), then \(e_j = e_k\) and thus \(q(\tau_i + \tau_j + \tau_k) = q(\tau_i + 2 \tau_j) = b = d\) by the third case of~\(q\).
    Otherwise, \(\set{e_i, e_j, e_k}\) induces either a triangle (if \(d = a\)) or a path in~\(G\).
    In both cases, it follows that \(e_i \cap e_j \cap e_k = \emptyset\), implying \(e_i \cap e_j \not\subseteq e_k\), and that \(e_k \setminus e_j = \set{d}\).
    But then, we have that \(q(\tau_i + \tau_j + \tau_k) = d\) by the fourth case of~\(q\).

    Let now \(W\)~be an unhit walk with total delay~\(t_W\), and let \(\ell \defas \argmax_{i \in [m]} \tau_i\) be the index of the unique edge with maximum delay.
    As \(W\)~has length at most two, we have that \(t_W \leq 2 \tau_\ell\).
    Equality implies that \(W\)~must traverse \(e_\ell\)~twice, but we already ruled this out above; so \(t_W < 2 \tau_\ell\).
    Since \(q(2 \tau_\ell) = \min e_\ell\), we have \(t_W < \max\supp(q)\), and so \((\tau, q)\) is indeed a winning strategy for the pursuer.
    It only remains to bound its duration from above by
    \[
        \max\supp(q)
        \leq \max_{i,j,k \in [m]} \tau_i + \tau_j + \tau_k
        = 3 \max T
        \in \calO\bigl(m^3\bigr).
        \qedhere
    \]
\end{proof}

\section{Catching a patient evader}%
\label{sec:waiting}

In \Cref{sec:baseline}, we showed how to catch an evader who leaves a vertex as soon as they arrive.
This is precisely the \emph{hunter and rabbit} setting with only one hunter and an invisible rabbit.
Are we still able to get a hold of a perfectly camouflaged rabbit if it manages to remain calm at a vertex, delaying its exit by any number of time steps?
Our second theorem states that this is indeed possible, and with only a linear overhead in the duration of the pursuer's strategy.

\begin{theorem}\label{thm:waiting-general}
    All graphs are waiting-solvable in time~\smash{\(\calO\bigl(nm^3\bigr)\)}.
\end{theorem}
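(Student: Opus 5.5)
The plan is to reduce to \Cref{thm:baseline-general}. The strategy there catches every non-waiting walk of length three after its first, second or third edge, by querying at times that are sums of one, two or three delays from a \(B_{0,3}^2\)-set~\(T\). A lazy walk that has traversed edges \(e_i, e_j, e_k\) with total waiting~\(W\) arrives at time \(\tau_i + \tau_j + \tau_k + W\). So if all delays are scaled by a factor \(N \in \Theta(n)\), these arrival times fall into disjoint ``blocks'' \(\{Ns, \ldots, Ns + N - 1\}\), indexed by the baseline time~\(s\), provided \(W < N\). Inside each relevant block we then repeat the baseline query \(q(s)\) at every offset. The waiting is thereby absorbed into the offset, and the whole difficulty becomes bounding \(W\) by \(\calO(n)\) through additional queries that force the evader to move.

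\textbf{Construction.} I would take \(T\) as in the proof of \Cref{thm:baseline-general}, but replace it by \(2T\), so that every sum \(s\) of one to three delays used by the baseline query function~\(q\) is even. I would then set \(\tau'(e_i) \defas N \cdot 2\tau_i\) with \(N \defas c(n+1)\) for a small constant~\(c\), say \(c = 4\). The queries are of two kinds.
\begin{enumerate}
\item \emph{Sweep queries} form a block in which the vertices \(1, \ldots, n\) are queried in order. They occur at the start, at times \(1, \ldots, n\), and inside every odd block \(s\), at times \(Ns + 1, \ldots, Ns + n\).
\item \emph{Repeated baseline queries} set \(q'(Ns + r) \defas q(s)\) for every even \(s \in \supp(q)\) and all offsets \(0 \leq r < N\).
\end{enumerate}
These two kinds use disjoint times, so \(q'\) is well defined by the well-definedness of~\(q\) already established. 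The duration is at most \(N \cdot 2 \cdot 3 \max T + N \in \calO(n m^3)\).

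\textbf{Correctness.} The initial sweep forces the evader to leave its start vertex before time~\(n\), exactly as in the directed warm-up. After traversing an edge, the evader stands at a vertex~\(b\) at a time in an even block~\(s\), since all delays are even multiples of~\(N\). Either it is caught by the repeated query \(q(s) = b\), or it can wait only until the next odd block, where the sweep hits~\(b\). Hence between consecutive edges the evader waits fewer than about \(2N\) steps. The point to check is that this accumulated lateness does not push it out of its block before the next arrival.

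Here the bookkeeping has to be set up more carefully. Rather than indexing blocks by width~\(N\), I would let the delays be multiples of a larger factor \(M \in \Theta(n)\), for instance \(M = 8(n+1)\). Sweeps then occur at offsets in the second half of each block, where the evader can only be waiting, while the baseline repetitions fill the offsets \([0, M/2)\). An evader arriving at offset \(r < M/2\) is either hit immediately or waits at most until the sweep in the same block, so it leaves with lateness \(< M\). Every arrival after up to three edges then lands at offset \(< M/2\) of the block indexed by the correct baseline sum, once the lateness per edge is kept below \(M/6\). The sweep position must be chosen accordingly, for example by sweeping every window of length \(n\) not occupied by a relevant baseline block.

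Given this, the walk argument of \Cref{thm:baseline-general} transfers verbatim: every lazy walk with three edges is hit at arrival. Lazy walks using fewer edges are hit by the sweeps before the end, since a final sweep placed after \(2 \max T\) blocks catches any evader idling at a vertex.

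\textbf{Main obstacle.} I expect the hardest step to be exactly this interplay: placing the sweep queries so that they bound the waiting at \emph{every} vertex the evader might idle at, while never colliding with the repeated baseline queries and never letting the accumulated lateness cross a block boundary. My first attempt is the parity trick combined with the half-block offset split above, with constants tuned until the three-edge lateness stays below the width of the repetition window.
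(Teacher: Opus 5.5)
Your proposal has a genuine gap at exactly the step you flag as the main obstacle, and tuning constants does not close it. Suppose block $s$ repeats $q(s)$ on offsets $[0,L)$, and waiting is bounded by a sweep over all $n$ vertices placed after that window. An evader that is not the target of $q(s)$ can then stay at its vertex through the entire window. It leaves one step before the sweep reaches its vertex, which for every vertex except the first-swept one is an offset $\geq L$. Since all delays are multiples of the block width, it arrives at its next vertex at that same offset, outside the repetition window of the next block. So it is never tested against $q(\tau_i+\tau_j)$ or $q(\tau_i+\tau_j+\tau_k)$, and the walk argument of \Cref{thm:baseline-general} never applies. For instance, a walk $(a,b,c,d)$ with $a=\min e_i$ is not hit at $b$, because $q(\tau_i)=a$; since $b=\max e_i\geq 2$ is not swept first, it can reach $c$ in the second half of its block. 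Your condition ``lateness per edge below $M/6$'' is therefore not enforced by any query you describe: a trailing sweep bounds the lateness only by the window length plus up to $n$, which always exceeds the window. The parity-based first attempt fails in the same way.

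The missing idea in the paper is that one query per block suffices to cap waiting. Every evader that reaches the block of a baseline time $t$ inside the window and survives $q(t)$ must sit at the same vertex, namely the other end of the covered walks: $r(\tau_i)=\max e_i$, and $r(\tau_i+\tau_j)=\ast(e_i\setminus e_j)$ for $i<j$. The paper scales delays by $n+1$, repeats $q$ at offsets $0,\dots,n-1$, and queries $r$ at offset $n$, directly after the window. Together with the initial sweep $q'(t)=t$ for $t\in[n]$, this shows that every lazy walk of length at most two with total waiting at least $n$ is hit. The waiting budget is thus global: a surviving walk waits fewer than $n$ steps in total at its first three vertices, so every arrival lands in the window of the correct block.

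Alternatively, you can keep your sweeps by making the window depend on depth. The $B_{0,3}^2$ property determines the number $d$ of summands of every relevant block index. Use a window $[0,dn)$ followed by a sweep on $[dn,(d+1)n)$ for $d\in\{1,2\}$, a window $[0,3n)$ for $d=3$, and block width at least $3n$. The offset then grows by less than $n$ per edge and stays inside the next window. The sweeps also catch walks of length at most two, and the duration is again $\calO(nm^3)$.
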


The strategy extends that of \Cref{thm:baseline-general}.
To account for waiting times, we scale all delays by a factor of \(s \defas n + 1\).
The new query function~\(q'\) is then derived from~\(q\) as follows.
First, we query each vertex once, that is \(q'(i) \defas i\) for all \(i \in [n]\).
Then, for each query~\(q(t)\) at time~\(t\) in the original strategy, we make \(n\)~queries \(q'(st + i - 1) \defas q(t)\) for all \(i \in [n]\), followed by a single query, at time \(st + n\), on the opposite end of the walk that was covered by~\(q(t)\).
The initial and single queries ensure that the evader cannot accumulate a waiting time of more than \(n - 1\) without being caught.
But then, they arrive in time to be hit by the \(n\)-fold repetition of the original strategy~\(q\).

\begin{proof}[Proof of \Cref{thm:waiting-general}.]
    Label the vertices \(V = [n]\) arbitrarily and order the edges lexicographically as in the proof of \Cref{thm:baseline-general}.
    Let further \(s \defas n + 1\) be a scaling factor, assign a delay of \(\tau'(e_i) \defas s \tau_i\) for all \(i \in [m]\), and consider the query function \(q' \colon \Zp \rightharpoonup V\) given by
    \[
        q'(t) \defas \begin{cases}
            t, & t \in [n], \\
            q(\lfloor t / s \rfloor), & t > n \land t \bmod s \neq n, \\
            r(\lfloor t / s \rfloor), & t > n \land t \bmod s = n,
        \end{cases}
    \]
    where \(q\) and~\(\tau_i\) are defined as in the proof of \Cref{thm:baseline-general}, and where \(r\)~is given by
    \[
        r(t) \defas \begin{cases}
            \max e_i, & \exists i \in [m] \colon t = \tau_i, \\
            \ast(e_i \setminus e_j), & \exists i < j \in [m], v \in V \colon t = \tau_i + \tau_j \land e_j \setminus e_i = \set{v}.
        \end{cases}
    \]
    Intuitively, \(r\)~covers the opposite end of the walks of length one and two that are covered by~\(q\).
    Note that the definition of \(q'(t)\) may refer to \(q\) and~\(r\) at undefined points; with this we mean that \(q'\)~is itself undefined at~\(t\).
    As the cases in the definition of~\(r\) are a subset of the cases of~\(q\), it follows that also~\(r\), and by extension~\(q'\), are well-defined partial functions.

    We first show that any lazy walk of length at most two that waits for at least \(n\) time steps is hit by a query.
    If the walk has length zero, then the evader simply waits for \(n\) or more time steps at their chosen starting vertex~\(v \in V\), and is found at time~\(v \leq n\) by the first case in the definition of~\(q'\); formally \(q'(v) = v\).
    For a lazy walk of length one, consider \(W = (a, b)\) with waiting times \((\alpha, \beta)\) such that \(\alpha + \beta \geq n\), and let \(i \in [m]\) with \(e_i = \set{a, b}\).
    Suppose towards a contradiction that \(W\)~is not hit by~\(q'\), that is
    \[
        \forall w \in \langle\alpha\rangle \colon q'(w) \neq a
        \qquad\text{and}\qquad
        \forall w \in \langle\beta\rangle \colon q'(\alpha + s \tau_i + w) \neq b.
    \]
    Then, we have that \(\alpha < n < s\), as otherwise \(a \in \langle\alpha\rangle\) and \(q'(a) = a\), as for the lazy walk of length zero.
    We distinguish two cases based on the direction of the walk along~\(e_i\).
    If \(b = \min e_i\), then choosing \(w = 0 \in \langle\beta\rangle\) gives
    \[
        \alpha + s \tau_i + w \geq s > n
        \qquad\text{and}\qquad
        (\alpha + s \tau_i + w) \bmod s = \alpha < n,
    \]
    implying the contradiction that
    \[
        q'(\alpha + s \tau_i + w)
        = q\left(\left\lfloor \frac{\alpha + s \tau_i + w}{s} \right\rfloor\right)
        = q\left(\left\lfloor \frac{\alpha}{s} + \tau_i \right\rfloor\right)
        = q(\tau_i)
        = \min e_i
        = b.
    \]
    If \(b = \max e_i\), then picking \(w = n - \alpha \in \langle\beta\rangle\) yields
    \[
        \alpha + s \tau_i + w \geq s > n
        \qquad\text{and}\qquad
        (\alpha + s \tau_i + w) \bmod s = \alpha + w = n,
    \]
    and thus another contradiction is reached as
    \[
        q'(\alpha + s \tau_i + w)
        = r\left(\left\lfloor \frac{\alpha + s \tau_i + w}{s} \right\rfloor\right)
        = r\left(\left\lfloor \frac{n}{s} + \tau_i \right\rfloor\right)
        = r(\tau_i)
        = \max e_i
        = b.
    \]

    Let now \(W = (a, b, c)\) be a lazy walk of length two with waiting times \((\alpha, \beta, \gamma)\), where \(\alpha + \beta + \gamma \geq n\).
    Let further \(i, j \in [m]\) index the edges \(e_i = \set{a, b}\) and \(e_j = \set{b, c}\).
    Suppose towards a contradiction that \(W\)~is not hit by~\(q'\), meaning that
    \[
        \forall w \in \langle\alpha\rangle \colon q'(w) \neq a,
        \qquad
        \forall w \in \langle\beta\rangle \colon q'(\alpha + s \tau_i + w) \neq b,
        \qquad\text{and}\qquad
        \forall w \in \langle\gamma\rangle \colon q'(t_w) \neq c,
    \]
    where we abbreviate the last time as
    \[
        t_w \defas \alpha + \beta + w + s \tau_i + s \tau_j.
    \]
    We have again that \(\alpha < n < s\), as otherwise \(a \in \langle\alpha\rangle\) and \(q'(a) = a\).
    Similarly, we know that \(\alpha + \beta < n\), or else \(q'(\alpha + s \tau_i + w) = b\) for some \(w \in \langle\beta\rangle\) as in the previous case.
    We may also assume that \(a = \min e_i\), as otherwise \(\alpha + s \tau_i + w \geq s > n\) and \((\alpha + s \tau_i + w) \bmod s = \alpha < n\) for \(w = 0\) would imply \(q'(\alpha + s \tau_i + w) = \min e_i = b\).
    We next distinguish two cases based on the topology of the walk.
    If \(a = c\), meaning that \(W\)~goes back and forth along~\(e_i = e_j\), we have that \(W\)~is covered by~\(q'\) for \(w = 0 \in \langle\gamma\rangle\), as
    \begin{equation}\label{eq:wg-1}
        t_w \geq s > n
        \qquad\text{and}\qquad
        t_w \bmod s = \alpha + \beta < n
    \end{equation}
    together imply
    \[
        q'(t_w)
        = q\left(\left\lfloor \frac{\alpha + \beta}{s} + 2 \tau_i \right\rfloor\right)
        = q(2 \tau_i)
        = \min e_i
        = a
        = c.
    \]
    If \(a \neq c\), such that \(W\)~is a path, then we need to distinguish two subcases based on its direction with respect to the edges \(e_i\)~and~\(e_j\).
    Informally, we will argue that \(W\)~is hit by the \(q\)-part of~\(q'\) if \(e_i\) comes before~\(e_j\), and by the \(r\)-part if \(e_i\) comes after~\(e_j\).
    Let first \(i < j\).
    Then, \(W\)~is covered by~\(q'\) for \(w = 0 \in \langle\gamma\rangle\), as the conditions~\eqref{eq:wg-1} and \(e_j \setminus e_i = \set{c}\) again imply that
    \[
        q'(t_w)
        = q\left(\left\lfloor \frac{\alpha + \beta}{s} + \tau_i + \tau_j \right\rfloor\right)
        = q(\tau_i + \tau_j)
        = c.
    \]
    Otherwise, we have \(i > j\), and \(W\)~is covered by~\(q'\) for \(w = n - \alpha - \beta \in \langle\gamma\rangle\), since then
    \[
        t_w \geq s > n
        \qquad\text{and}\qquad
        t_w \bmod s = n
    \]
    also imply that
    \[
        q'(t_w)
        = r\left(\left\lfloor \frac{n}{s} + \tau_i + \tau_j \right\rfloor\right)
        = r(\tau_i + \tau_j)
        = \ast(e_j \setminus e_i)
        = c.
    \]

    So far, we argued that every lazy walk of length at most two is hit by~\(q'\) if its total waiting time is at least~\(n\).
    It follows that any lazy walk of length at least three that is not hit by~\(q'\) must wait less than~\(n\) time steps at its first three vertices.
    We next argue that such a walk is hit by~\(q'\) as well.
    To this end, let \(W = (a, b, c, d)\) be a lazy walk with waiting times \((\alpha, \beta, \gamma, 0)\) such that \(\alpha + \beta + \gamma < n\).
    Let further \(W\)~use the edges
    \[
        e_i = \set{a, b},
        \quad
        e_j = \set{b, c},
        \quad\text{and}\quad
        e_k = \set{c, d},
    \]
    for some \(i, j, k \in [m]\).
    Suppose towards a contradiction that \(W\)~is not hit by~\(q'\),
    implying that
    \begin{align*}
        \forall w \in \langle\beta\rangle \colon q'(\alpha + s \tau_i + w) &\neq b, \\
        \forall w \in \langle\gamma\rangle \colon q'(\alpha + s \tau_i + \beta + s \tau_j + w) &\neq c, ~ \text{and} \\
        q'(\alpha + s \tau_i + \beta + s \tau_j + \gamma + s \tau_k) &\neq d.
    \end{align*}
    Fix \(w = 0\).
    Then, since \(\alpha + \beta + \gamma < n\) and \(s > n\), we have that
        \begin{align*}
            q'(\alpha + s \tau_i + w) &= q(\tau_i) \neq b, \\
            q'(\alpha + s \tau_i + \beta + s \tau_j + w) &= q(\tau_i + \tau_j) \neq c, ~ \text{and} \\
            q'(\alpha + s \tau_i + \beta + s \tau_j + \gamma + s \tau_k) &= q(\tau_i + \tau_j + \tau_k) \neq d.
    \end{align*}
    From here, a contradiction is reached using the same sequence of arguments as in the penultimate paragraph of the proof of \Cref{thm:baseline-general}.

    Let now \(W\)~be an unhit lazy walk with total delay~\(t_W\), and let \(\ell \defas \argmax_{i \in [m]} \tau_i\) be the index of the unique edge with maximum delay.
    We just established that \(W\)~traverses at most two edges and has total waiting time at most \(n - 1\), so we know that \(t_W \leq t^*\) for \(t^* \defas 2 s \tau_\ell + n - 1\).
    Equality can only occur if \(W\)~traverses \(e_\ell = \set{a, b}\) twice.
    Then, it takes the form \(W = (a, b, a)\) with waiting times \((\alpha, \beta, \gamma)\) such that \(\alpha + \beta + \gamma = n - 1\).
    If \(a > b\), then \(W\)~is hit after the first traversal of~\(e_\ell\) at time \(t \defas \alpha + s \tau_\ell > n\), since \(t \bmod s = \alpha \neq n\) implies
    \[
        q'(t) = q(\lfloor t/s \rfloor) = q(\tau_\ell) = \min e_\ell = b.
    \]
    If \(a < b\), then \(W\)~is hit following the second traversal of~\(e_\ell\) at time \(t \defas \alpha + \beta + 2 s \tau_\ell > n\), as again \(t \bmod s = \alpha + \beta \neq n\) and therefore
    \[
        q'(t) = q(\lfloor t/s \rfloor) = q(2 \tau_\ell) = \min e_\ell = a.
    \]
    Both cases contradict \(W\)~being unhit, and so it must be that \(t_W < t^*\).
    On the other hand, we have \(q'(t^*) = q(2 \tau_\ell) = \min e_\ell\), and therefore \((\tau', q')\) is winning as \(t_W < \max\supp(q')\).

    Since \(\supp(r) \subseteq \supp(q)\) by construction and
    \[
        t
        \leq s \lceil t / s \rceil
        \leq s (\lfloor t / s \rfloor + 1)
        \leq 2 s \lfloor t / s \rfloor
    \]
    holds for \(t > n\), the duration of the strategy is further bounded from above by
    \[
        \max\supp(q')
        \leq \max\set{n, 2 s \max\supp(q), 2 s \max\supp(r)}
        = 2 s \max\supp(q)
        \in \calO\bigl(n m^3\bigr).
        \qedhere
    \]
\end{proof}

\section{Starting late}%
\label{sec:waiting-freestart}

We next consider the most general setting where the evader, in addition to performing a lazy walk, has the power to start an unknown number of time steps before the pursuer begins their search.
This can also be viewed as the evader ``starting inside an edge,'' as we may consider an edge with delay~\(d\) to admit \(d-1\) intermediate positions that are occupied in sequence while the edge is traversed (see \Cref{fig:models}).
We show that the pursuer's ability to determine travel times still allows them to catch the evader on any graph, albeit after $\mathcal{O}(8^n)$ time steps in the worst case.
The same strategy applies to the setting where the evader starts early but cannot wait, and we leave it as an open question whether these settings can be separated.
When the playing field is restricted to trees, we give in \Cref{sec:waiting-freestart-trees} a strategy with an improved search time in~\(\mathcal{O}(4^n)\).

\subsection{A strategy for general graphs}%
\label{sec:waiting-freestart-general}

We first show how the pursuer can win on general graphs.
As a basis for our recursive strategy, we use the single-edge strategy shown in \Cref{fig:models} on \cpageref{fig:models}.
As the figure serves as a proof without words, we defer the formal argument to the appendix.

\begin{restatable}{lemma}{waitingfreestartedge}\label{lem:waiting-freestart-edge}
    The graph \(G = (\set{1, 2}, \set{\set{1, 2}})\) is waiting-freestart-solvable in time~\(4\).
\end{restatable}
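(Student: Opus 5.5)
The plan is to fix the strategy depicted in \Cref{fig:models} and verify it by tracking, time step by time step, the set of all states that an evader could occupy without having been caught so far.
The strategy assigns \(\tau(\set{1,2}) \defas 2\) and makes the queries \(q(1) \defas 1\), \(q(2) \defas 2\), \(q(3) \defas 1\), and \(q(4) \defas 2\), so its duration is \(4\).
Because the delay is \(2\), an evader at any integer time \(t\) is in one of four states.
It is either at vertex~\(1\) or at vertex~\(2\).
Otherwise it is halfway along the edge, having entered it at time \(t-1\) heading towards~\(1\) (denote this state \(m_1\)) or towards~\(2\) (denote it \(m_2\)).

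The first step is to make this state description precise from the definition of a lazy walk with start time \(t_{\mathrm{e}} \le 0\).
At each time \(t\), the evader's position is determined by \(t_0\), the waiting times, and the delays.
The one-step transitions are then as follows.
From vertex \(v\), the evader may stay at \(v\) (waiting) or enter the edge, becoming \(m_{3-v}\).
From \(m_v\), the evader is forced to arrive at \(v\).
Since the evader may have started arbitrarily early and waited, every state is possible at time \(0\), so \(S_0 = \set{1, 2, m_1, m_2}\).
The set \(S_t\) of uncaught states at time \(t\) is obtained by applying the transitions to \(S_{t-1}\) and then deleting the queried vertex \(q(t)\).

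The second step is the routine computation:
\begin{itemize}
    \item \(S_1 = \set{1,2,m_1,m_2} \setminus \set{1} = \set{2, m_1, m_2}\);
    \item \(S_2 = \set{2, m_1, 1} \setminus \set{2} = \set{1, m_1}\);
    \item \(S_3 = \set{1, m_2} \setminus \set{1} = \set{m_2}\);
    \item \(S_4 = \set{2} \setminus \set{2} = \emptyset\).
\end{itemize}
Hence no lazy walk, whatever its start time, can avoid all queries up to time \(4 = \max\supp(q)\), which is exactly the winning condition.

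The only step requiring care is the first one: justifying that the four-state abstraction captures every possible evader, including those whose walks begin before time~\(0\).
In particular, I must check that ``inside the edge at time \(t\)'' corresponds precisely to having entered at time \(t-1\) with a fixed direction.
I must also check that the winning condition (total delay at least \(4 - t_{\mathrm{e}}\)) means the evader is still present in the graph at every time \(t \in [4]\).
Once that correspondence is stated, the table above completes the proof.
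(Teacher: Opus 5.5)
Your proposal is correct and uses the same strategy as the paper's appendix proof: delay \(2\) and alternating queries, where the paper's \(q(t) = t \bmod 2 + 1\) is yours with the two vertex labels swapped, and your sets \(S_0, \ldots, S_4\) check out. The difference is only in bookkeeping: the paper does a case analysis over the evader's last non-positive time \(t_0 \in \set{-1, 0}\) at a vertex and that vertex \(v_0\), whereas you propagate the set of possible evader states forward in time, which is the content of \Cref{fig:models} made formal.
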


We next extend the edge strategy to any graph.

\begin{figure}
    \centering
    \newcommand{\completeGraphExtend}[8]{%
    \begin{tikzpicture}[baseline=(n.south)]
        \node[_named_vertex, #1] (n) at (0.00, 3*0.10) {$v$};
        \node (v1) at (-0.75, -3*0.85) {};
        \node (v2) at ( 0.75, -3*0.85) {};
        \node[
            draw,
            ellipse,
            minimum width=2.2cm,
            minimum height=0.4cm,
            #5%
        ] (oval) at (0, -3*0.86) { };

        \node[_vertex, _small, #2] (e1_l1)  at (-0.75+0.2, -3*0.1)  {};
        \node[_vertex, _small, #3] (e1_lT1) at (-0.75+0.2, -3*0.38) {};
        \node[_vertex, _small, #4] (e1_ln)  at (-0.75+0.2, -3*0.65) {};
        \node[_vertex, _small, #8] (e1_r1)  at (-0.75-0.2, -3*0.1)  {};
        \node[_vertex, _small, #7] (e1_rT1) at (-0.75-0.2, -3*0.38) {};
        \node[_vertex, _small, #6] (e1_rn)  at (-0.75-0.2, -3*0.65) {};

        \node[_vertex, _small, #2] (e2_l1)  at (0.75+0.2, -3*0.1)  {};
        \node[_vertex, _small, #3] (e2_lT1) at (0.75+0.2, -3*0.38) {};
        \node[_vertex, _small, #4] (e2_ln)  at (0.75+0.2, -3*0.65) {};
        \node[_vertex, _small, #8] (e2_r1)  at (0.75-0.2, -3*0.1)  {};
        \node[_vertex, _small, #7] (e2_rT1) at (0.75-0.2, -3*0.38) {};
        \node[_vertex, _small, #6] (e2_rn)  at (0.75-0.2, -3*0.65) {};

        \node at ( 0,     -3*0.4) {$\hdots$};
        \node at (-0.75, -3*0.25) {$\hdots$};
        \node at ( 0.75, -3*0.25) {$\hdots$};
        \node at (-0.75, -3*0.52) {$\hdots$};
        \node at ( 0.75, -3*0.52) {$\hdots$};

        \draw[_arc] (e1_ln) -- (v1);
        \draw[_arc] (v1) -- (e1_rn);
        \draw[_arc] (n) -- (e1_l1);
        \draw[_arc] (e1_r1) -- (n);
        \draw[_arc] {(e1_ln)+(0, 3*0.1)} -- (e1_ln);
        \draw[_arc] (e1_rn) -- (-0.75-0.2, -3*0.65 + 3*0.1);
        \draw[_arc] {(e1_lT1)+(0, 3*0.1)} -- (e1_lT1);
        \draw[_arc] (e1_rT1) -- (-0.75-0.2, -3*0.38 + 3*0.1);
        \draw[_arc] (e1_lT1) -- (-0.75+0.2, -3*0.38 - 3*0.1);
        \draw[_arc] (-0.75-0.2, -3*0.38 - 3*0.1) -- (e1_rT1);
        \draw[_arc] (e1_l1) -- (-0.75+0.2, -3*0.1 - 3*0.1);
        \draw[_arc] (-0.75-0.2, -3*0.1 - 3*0.1) -- (e1_r1);

        \draw[_arc] (e2_ln) -- (v2);
        \draw[_arc] (v2) -- (e2_rn);
        \draw[_arc] (n) -- (e2_l1);
        \draw[_arc] (e2_r1) -- (n);
        \draw[_arc] {(e2_ln)+(0, 3*0.1)} -- (e2_ln);
        \draw[_arc] (e2_rn) -- (0.75-0.2, -3*0.65 + 3*0.1);
        \draw[_arc] {(e2_lT1)+(0, 3*0.1)} -- (e2_lT1);
        \draw[_arc] (e2_rT1) -- (0.75-0.2, -3*0.38 + 3*0.1);
        \draw[_arc] (e2_lT1) -- (0.75+0.2, -3*0.38 - 3*0.1);
        \draw[_arc] (0.75-0.2, -3*0.38 - 3*0.1) -- (e2_rT1);
        \draw[_arc] (e2_l1) -- (0.75+0.2, -3*0.1 - 3*0.1);
        \draw[_arc] (0.75-0.2, -3*0.1 - 3*0.1) -- (e2_r1);
    \end{tikzpicture}%
}

\begin{tikzpicture}[
    every node/.style={font=\small},
    _small/.style={inner sep=1.75pt},
    b/.style={_burning},                
    p/.style={_possible},               
    h/.style={_hit},                    
    P/.style={_region_hit, _possible},  
    H/.style={_region_hit},             
]
    \matrix[
        matrix of nodes,
        column sep=1em,
    ] {
        \completeGraphExtend{h}{b}{b}{b}{b}{b}{b}{b} &
        \completeGraphExtend{h}{ }{ }{ }{b}{b}{b}{b} &
        \completeGraphExtend{b}{ }{ }{ }{P}{b}{b}{b} &
        \completeGraphExtend{b}{b}{ }{ }{H}{p}{b}{b} &
        \completeGraphExtend{h}{b}{b}{ }{ }{ }{p}{b} \\
        \node {$q'(1)=n$};                           &
        \node {$q'(2T)=n$};                          &
        \node {$q'(2T+1)=q(1)$};                     &
        \node {$q'(3T)=q(T)$};                       &
        \node {$q'(3T+1)=n$};                        \\[1ex]
        \completeGraphExtend{h}{ }{ }{ }{b}{b}{ }{ } &
        \completeGraphExtend{ }{ }{ }{ }{P}{b}{b}{ } &
        \completeGraphExtend{ }{ }{ }{ }{H}{p}{b}{b} &
        \completeGraphExtend{h}{ }{ }{ }{ }{ }{p}{b} &
        \completeGraphExtend{h}{ }{ }{ }{ }{ }{ }{ } \\
        \node {$q'(5T)=n$};                          &
        \node {$q'(5T+1)=q(1)$};                     &
        \node {$q'(6T)=q(T)$};                       &
        \node {$q'(6T+1)=n$};                        &
        \node {$q'(8T)=n$};                          \\
    };
\end{tikzpicture}
    \caption{%
        The winning strategy \((\tau', q')\) constructed in the proof of \Cref{thm:waiting-freestart-general} for a complete graph~\(K_{n-1}\) (oval) that is extended to the~\(K_n\) by adding the universal vertex~\(v\).
        All edges incident to~\(v\) are assigned a delay of~\(2T\), where \(T\)~is the duration of the sub-strategy \((\tau, q)\) solving the~\(K_{n-1}\).
        The first, \(T\)-th, and last vertices of the two directed paths representing an edge are shown.
        In addition to the keys given in \Cref{fig:models}, a dashed thick blue outline denotes that a query is made somewhere in the marked region, while striped red filling indicates that whether the evader can be present depends on the sub-strategy.
    }
    \label{fig:waiting-freestart-general}
\end{figure}

\begin{theorem}\label{thm:waiting-freestart-general}
    Every graph is waiting-freestart-solvable in time~\(\mathcal{O}(8^n)\).
\end{theorem}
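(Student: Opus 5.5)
We argue by induction on the number of vertices, following the recursive scheme sketched in \Cref{sec:results} and illustrated in \Cref{fig:waiting-freestart-general}. The inductive claim is that every graph \(G\) on \(n\) vertices is waiting-freestart-solvable in time \(T_n\) with \(T_n \le c \cdot 8^n\). For \(n \le 2\) this follows from \Cref{lem:waiting-freestart-edge}, or from a single query at time~\(1\) when there are no edges. Since disconnected graphs can be handled by solving each component in turn, and adding edges only helps the evader, it suffices to treat the complete graph \(K_n\).

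For the inductive step we write \(K_n\) as \(K_{n-1}\) plus a universal vertex~\(v\). Let \((\tau, q)\) be a winning strategy on \(K_{n-1}\) of duration \(T \defas T_{n-1}\). We keep \(\tau\) on the old edges and give every edge incident to \(v\) the delay \(2T\). The new query function \(q'\) follows the schedule in the figure:
\begin{enumerate}
    \item query \(v\) at every time in \([1, 2T]\);
    \item run \(q\) shifted by \(2T\), i.e.\ \(q'(2T + t) \defas q(t)\) for \(t \in [T]\);
    \item query \(v\) during \([3T+1, 5T]\);
    \item run the shifted copy \(q'(5T + t) \defas q(t)\);
    \item query \(v\) during \([6T+1, 8T]\).
\end{enumerate}
The duration is then \(T_n = 8T_{n-1}\), which gives the \(\calO(8^n)\) bound.

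The correctness argument tracks the possible evader states through these phases.
\begin{itemize}
    \item \emph{Phase 1.} Because \(v\) is queried continuously during \([1,2T]\), an evader cannot sit at \(v\) at any such time. Any evader on a fuse edge at time~\(0\) is therefore either heading towards \(v\) and caught on arrival, or heading into the old part. Consequently, at time \(2T\) the evader is either in \(K_{n-1}\) or on a fuse edge travelling away from \(v\), having entered it during \([1,2T]\) or earlier.
    \item \emph{Phase 2.} An evader inside \(K_{n-1}\) during \([2T+1, 3T]\) behaves like a freestart lazy walker in \(K_{n-1}\) with an unknown earlier start. It cannot leave through a fuse edge and come back within \(T\) steps, since a fuse takes \(2T\) to traverse. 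Leaving through a fuse means arriving at \(v\) at least \(2T\) later. Hence the inductive hypothesis, applied with the freestart property so that the evader's prehistory is irrelevant, shows that every evader that stays in \(K_{n-1}\) throughout is caught.
    \item \emph{Remaining evaders.} Those not caught are on fuse edges at some point of \([2T, 3T]\). Either they were on a fuse toward \(K_{n-1}\) and arrive there at some time in \([2T, 4T]\), or they left \(K_{n-1}\) toward \(v\), in which case the \(v\)-queries in \([3T+1,5T]\) catch them upon arrival, which happens no later than \(5T\).
    \item \emph{Phases 3 and 4.} An evader that entered \(K_{n-1}\) late, after time \(2T\), is confined to \(K_{n-1}\) during the second copy \([5T+1,6T]\) unless it is again on a fuse. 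The final \(v\)-block \([6T+1,8T]\), of length \(2T\), catches anyone who leaves \(K_{n-1}\) toward \(v\) during the second copy. Anyone on a fuse heading into \(K_{n-1}\) at time \(5T\) must have entered it after time \(3T\) from \(v\), which is impossible since \(v\) was queried throughout \([3T+1,5T]\). The edge case of entering exactly at time \(3T\) needs to be checked separately.
    \item \emph{Termination.} Finally, a surviving walk must have total delay below \(8T\) relative to its start. This holds because the last query is at \(8T\) and no state survives.
\end{itemize}

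The main obstacle is making this phase bookkeeping rigorous, in particular the interaction between the evader's unknown start time and the sub-strategy. When invoking the inductive hypothesis at offset \(2T\) or \(5T\), I would formalise the following statement: a lazy walk that stays inside \(K_{n-1}\) during a window \([a+1, a+T]\) is hit by \(q(\cdot - a)\). This follows from freestart-solvability by treating \(t_{\mathrm{e}} - a\) as the start time and truncating the walk at its exit. A lemma of this form, together with careful interval accounting for the fuse edges (delay exactly \(2T\) against windows of length \(T\) and \(2T\)), should close the argument. If the boundary cases fail, the fallback is to enlarge the fuse delay slightly, for example to \(2T+1\), at the cost of constants only.
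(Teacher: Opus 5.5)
Your construction is exactly the paper's: delay \(2T\) on every edge at \(v\), \(v\) queried on \([1,2T]\), \([3T+1,5T]\), \([6T+1,8T]\), and shifted copies of \(q\) on \([2T+1,3T]\) and \([5T+1,6T]\). The reduction to \(K_n\) and your shifted-window lemma are also fine. The gap is in the phase bookkeeping, which is the whole content of the proof, and it breaks at the decisive step.

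Under ``Remaining evaders'' you claim that an evader who leaves \(K_{n-1}\) toward \(v\) is caught on arrival by the block \([3T+1,5T]\). This is false for departures at times in \([1,T]\), which are always available because only \(v\) is queried before \(2T+1\). Such an evader reaches \(v\) during \([2T+1,3T]\), exactly while the first copy of \(q\) runs and \(v\) is not queried. This is the only way to survive the first copy, and it is why the second copy and the final \(v\)-block exist at all. Conversely, the survivors your Phase~1 produces cannot occur. Being strictly inside a fuse travelling away from \(v\) at time \(2T\), or arriving in \(K_{n-1}\) at any time in \([2T+1,4T]\), requires having been at \(v\) at some time in \([1,2T]\). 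So the set of evaders you carry into Phases~3--4 is misidentified.

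The Phases~3--4 bullet has a matching hole. You only treat evaders that leave \(K_{n-1}\) during the second copy. An evader leaving at a time in \([3T+1,4T]\) would reach \(v\) in the unqueried window \([5T+1,6T]\) and escape. What excludes this is that \(K_{n-1}\) holds no evader at any time in \([3T+1,4T]\). Anyone present there since a time \(\le 2T\) is caught by the first copy, and by the observation above nobody can enter during \([2T+1,4T]\).

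The clean way to do the bookkeeping, as the paper does, is to follow one evasive walk chronologically:
\begin{enumerate}
    \item Let \(t_0 \le 0\) be the last non-positive time at which the walk is at a vertex. Whether that vertex is \(v\) or lies in \(K_{n-1}\), the walk is in \(K_{n-1}\) by time \(2T\).
    \item Its first subsequent departure toward \(v\) cannot happen at a time \(\le 0\) or in \([T+1,3T-1]\), since it would then arrive at \(v\) while \(v\) is queried.
    \item It cannot stay in \(K_{n-1}\) through \(3T\), since the first copy of \(q\) catches it.
    \item Hence it leaves during \([1,T]\), must leave \(v\) again by \(3T\), and re-enters \(K_{n-1}\) during \([4T+1,5T]\).
    \item From there it is caught either by the second copy of \(q\) or on arrival at \(v\) during \([6T+1,8T-1]\).
\end{enumerate}
No enlargement of the fuse delays is needed.
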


\begin{proof}
    We construct a winning strategy with duration \(\lceil 8^n / 16 \rceil\) for the pursuer on the complete graph~\(K_n = (V, E)\).
    Since winning strategies remain winning if restricted to a subgraph, this implies the statement.

    For \(n = 1\), the pursuer wins by querying the only vertex at time \(1 = \lceil 8^n / 16 \rceil\).
    For \(n = 2\), we can employ the strategy of \Cref{lem:waiting-freestart-edge}, which has a duration of \(4 = 8^n / 16\) as claimed.

    For \(n \geq 3\), we may assume the induction hypothesis that there is a winning strategy \((\tau, q)\) for the pursuer on the \(K_{n-1} = (V, E)\) of duration \(T \defas \max\supp(q) = 8^{n - 1} / 16\).
    We extend it to a strategy \((\tau', q')\) for the~\(K_n = (V', E')\) as follows.
    Assume that vertices are labeled as \(V = [n - 1]\) and \(V' = [n]\), and let \(v \defas n\) be the unique vertex with \(v \in V' \setminus V\), whose incident edges are \(\delta(v) = E' \setminus E\).
    Let the new delays \(\tau' \colon E' \to \Zp\) be given by
    \[
        \tau'(e) \defas \begin{cases}
            \tau(e), & e \in E, \\
            2T, & e \in \delta(v),
        \end{cases}
    \]
    and define the query function \(q' \colon [8T] \to V'\) such that
    \[
        q'(t) \defas \begin{cases}
            q\bigl((t - 1) \bmod T + 1\bigr), & \lfloor (t - 1) / T \rfloor \in \set{2, 5}, \\
            v, & \lfloor (t - 1) / T \rfloor \in \set{0, 1, 3, 4, 6, 7}.
        \end{cases}
    \]
    This strategy is illustrated in \Cref{fig:waiting-freestart-general}.
    Less formally, \(q'\)~mimics~\(q\) on the subgraph \((V, E)\) in each of the length-\(T\) time horizons \([3T] \setminus [2T]\) and \([6T] \setminus [5T]\), and it queries the new vertex~\(v\) during the length-\(2T\) time horizons \([2T]\), \([5T] \setminus [3T]\), and \([8T] \setminus [6T]\).
    Note that \(q'\)~has a duration of \(8T = 8^n / 16\), so it only remains to show that \((\tau', q')\) is winning.

    Suppose towards a contradiction that the evader has a winning strategy starting at time \(t_{\mathrm{e}} \leq 0\), and let \(t_0\) with \(t_{\mathrm{e}} \leq t_0 \leq 0\) be the latest non-positive time at which they are located at a vertex.
    Denote this vertex by~\(v_0\).
    We first observe that the largest delay of any edge is~\(2T\).
    Otherwise, it must be that \(\tau(e) > 2T > \max\supp(q)\) for some edge \(e \in E\), but then \((\tau, q)\) could not be winning on the \(K_{n - 1}\) as the evader could enter~\(e\) at time~\(0\) to evade all queries.
    It follows that \(1 - t_0 \in [2T]\), as \(t_0 \leq -2T\) would contradict the choice of~\(t_0\).

    First consider the case that \(v_0 = v\).
    As the pursuer queries \(q'(1) = v\), the evader must leave~\(v\) along some edge \(\set{v, w} \in \delta(v)\) at a time~\(t_1\) with \(1 - 2T \leq t_0 \leq t_1 \leq 0\).
    The evader thus arrives at~\(w\) at time \(t_2 \defas t_1 + 2T \in [2T]\).
    If the evader remains on the subgraph \((V, E)\) at all times in \(\set{t_2, \ldots, 3T}\), then, as \(t_2 \leq 2T\), they are caught by the winning strategy that is executed on this subgraph in the time period \([3T] \setminus [2T]\).
    It follows that the evader must enter some edge \(e' \in \delta(v)\) in the direction of~\(v\) at a time \(t_3 \in [3T - 1]\), and arrives at~\(v\) at time \(t_4 \defas t_3 + 2T \in [5T - 1] \setminus [2T]\).
    Since \(q'(t) = v\) for all \(t \in [5T] \setminus [3T]\), we know that \(t_4 \in [3T] \setminus [2T]\), and further that the evader must leave~\(v\) along some edge \(\set{v, w'} \in \delta(v)\) at a time \(t_5 \in [3T] \setminus [2T]\).
    They then arrive at~\(w'\) at time \(t_6 \defas t_5 + 2T \in [5T] \setminus [4T]\).
    By an analogous argument to before, the evader cannot remain on the subgraph \((V, E)\) at all times in \(\set{t_6, \ldots, 6T}\), as a winning strategy is executed on this subgraph in the time period \([6T] \setminus [5T]\).
    Instead, they must enter some edge \(e'' \in \delta(v)\) in the direction of~\(v\) at a time \(t_7 \in [6T - 1] \setminus [4T]\), and arrive at~\(v\) at time \(t_8 \defas t_7 + 2T \in [8T - 1] \setminus [6T]\).
    But here they are immediately caught by a query, as \(q'(t) = v\) for all \(t \in [8T] \setminus [6T]\).

    It remains to consider the case where the evasive walk starts at a vertex \(v_0 \in V\).
    If the evader enters an edge \(e \in \delta(v)\) at a time~\(t\) with \(1 - 2T \leq t_0 \leq t \leq 0\), then they would reach vertex~\(v\) at time \(t + 2T \in [2T]\) and be hit by a query on arrival.
    On the other hand, if the evader remains on the subgraph \((V, E)\) at all times~\(t\) with \(t_0 \leq t \leq 3T\), then they would be caught by the copy of~\(q\) that is executed in the time period \([3T] \setminus [2T]\).
    Therefore, the evader must enter an edge \(e \in \delta(v)\) at some time \(t_1 \in [3T - 1]\).
    If they do so at a time \(t_1 \in [3T] \setminus [T]\), then they would arrive at vertex~\(v\) at time \(t_1 + 2T \in [5T] \setminus [3T]\), again being caught immediately.
    Hence, it must be that \(t_1 \in [T]\), so that the evader arrives at~\(v\) at some time \(t_2 \defas t_1 + 2T \in [3T] \setminus [2T]\).
    Waiting there until time \(3T + 1\) would again result in capture, so the evader must also enter some edge \(\set{v, w} \in \delta(v)\) at a time \(t_3 \in [3T] \setminus [2T]\), and arrive at vertex~\(w\) at time \(t_3 + 2T \in [5T] \setminus [4T]\).
    Remaining on the subgraph \((V, E)\) until time~\(6T\) would result in capture by the winning strategy executed in the time period \([6T] \setminus [5T]\), so the evader must enter an edge \(e' \in \delta(v)\) in the direction of~\(v\) at some time \(t_4 \in [6T - 1] \setminus [4T]\).
    It follows that the evader arrives at~\(v\) at time \(t_5 \defas t_4 + 2T \in [8T - 1] \setminus [6T]\), and is dispatched by a query immediately.

    As no combination of starting vertex~\(v_0\) and start time~\(t_{\mathrm{e}}\) allows the evader to elude all queries, it follows that \((\tau', q')\) is a winning strategy on \(K_n = (V', E')\).
\end{proof}

\subsection{A faster strategy for trees}%
\label{sec:waiting-freestart-trees}

When the graph being searched has no cycle, it is possible to improve the search time to~\(\mathcal{O}(4^n)\).
The pursuer's strategy is again built up inductively, starting from a single vertex and adding one leaf at a time.
The induction step is captured by the following lemma.

\begin{figure}
    \centering
    \newcommand{\treeExtend}[9]{%
    \begin{tikzpicture}[baseline=(v.south)]
        \node[_named_vertex,   #1] (v)     at ( 0.00,    0.00) {$v$};
        \node[_vertex, _small, #2] (e_l1)  at (+0.60, -3*0.15) {};
        \node[_vertex, _small, #3] (e_l2)  at (+0.60, -3*0.30) {};
        \node[_vertex, _small, #4] (e_ln)  at (+0.60, -3*0.70) {};
        \node[_vertex, _small, #8] (e_r1)  at (-0.60, -3*0.15) {};
        \node[_vertex, _small, #7] (e_r2)  at (-0.60, -3*0.30) {};
        \node[_vertex, _small, #6] (e_rn)  at (-0.60, -3*0.70) {};
        \node[_named_vertex,   #5] (w)     at ( 0.00, -3*0.85) {$w$};
        \node[                   ] (dots1) at (+0.60, -3*0.55) {$\vdots$};
        \node[                   ] (dots2) at (-0.60, -3*0.55) {$\vdots$};
        \node[
            regular polygon,
            regular polygon sides=3,
            draw,
            minimum size=1.5cm,
            below=0pt of w,
            #9%
        ] (A) {};

        \draw[_arc] (e_ln)           -- (w);
        \draw[_arc] (w)              -- (e_rn);
        \draw[_arc] (e_l1)           -- (e_l2);
        \draw[_arc] (v)              -- (e_l1);
        \draw[_arc] (e_r1)           -- (v);
        \draw[_arc] (e_r2)           -- (e_r1);
        \draw[_arc] (e_rn)           -- (-0.60, -3*+0.6);
        \draw[_arc] (-0.60, -3*+0.4) -- (e_r2);
        \draw[_arc] (e_l2)           -- (+0.60, -3*+0.4);
        \draw[_arc] (+0.60, -3*+0.6) -- (e_ln);
    \end{tikzpicture}%
}

\begin{tikzpicture}[
    every node/.style={font=\small},
    _small/.style={inner sep=1.75pt},
    b/.style={_burning},                
    p/.style={_possible},               
    h/.style={_hit},                    
    P/.style={_region_hit, _possible},  
    H/.style={_region_hit},             
]
    \matrix[
        matrix of nodes,
        column sep=1em,
    ] {
        \treeExtend{b}{b}{b}{b}{h}{b}{b}{b}{b} &
        \treeExtend{b}{b}{b}{b}{h}{ }{ }{b}{b} &
        \treeExtend{h}{b}{b}{b}{b}{ }{ }{ }{b} &
        \treeExtend{ }{ }{b}{b}{h}{b}{ }{ }{b} &
        \treeExtend{ }{ }{ }{ }{h}{ }{ }{b}{b} \\
        \node {$q'(1)=w$};                     &
        \node {$q'(T)=w$};                     &
        \node {$q'(T+1)=v$};                   &
        \node {$q'(T+2)=w$};                   &
        \node {$q'(2T+1)=w$};                  \\[1ex]
        \treeExtend{h}{ }{ }{ }{b}{ }{ }{ }{b} &
        \treeExtend{ }{ }{ }{ }{P}{b}{ }{ }{P} &
        \treeExtend{ }{ }{ }{ }{H}{p}{p}{b}{H} &
        \treeExtend{h}{ }{ }{ }{ }{ }{p}{p}{ } &
        \treeExtend{h}{ }{ }{ }{ }{ }{ }{ }{ } \\
        \node {$q'(2T+2)=v$};                  &
        \node {$q'(2T+3)=q(1)$};               &
        \node {$q'(3T+2)=q(T)$};               &
        \node {$q'(3T+3)=v$};                  &
        \node {$q'(4T+2)=v$};                  \\
    };
\end{tikzpicture}
    \caption{%
        The winning strategy \((\tau', q')\) given in \Cref{lem:waiting-freestart-leaf} for a tree (triangle) that is extended at~\(w\) with a leaf vertex~\(v\).
        The edge \(\set{v, w}\) has a delay of \(\tau'(\set{v, w}) = T + 1\).
        See \Cref{fig:models,fig:waiting-freestart-general} for keys.
    }
    \label{fig:waiting-freestart-leaf}
\end{figure}

\begin{lemma}\label{lem:waiting-freestart-leaf}
    Let \(G = (V, E)\) be a waiting-freestart-solvable graph and let \((\tau, q)\) be a winning strategy with duration~\(T\).
    Consider the graph \(G' = (V \cup \set{v}, E \cup \smash{\bset{\set{v, w}}})\) obtained by attaching a pendant vertex~\(v\) to some vertex \(w \in V\).
    Then, \(G'\)~is waiting-freestart-solvable in time \(4T + 2\).
\end{lemma}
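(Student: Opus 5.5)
The plan is to mirror the recursive construction in the proof of \Cref{thm:waiting-freestart-general}, but to exploit that the new vertex~\(v\) is a leaf, which needs fewer and cheaper batches. I would assign \(\tau'(\set{v, w}) \defas T + 1\) and \(\tau'(e) \defas \tau(e)\) for \(e \in E\), and define \(q' \colon [4T+2] \to V \cup \set{v}\), following \Cref{fig:waiting-freestart-leaf}, by
\[
    q'(t) \defas \begin{cases}
        w, & t \in [T] \cup \bigl([2T+1] \setminus [T+1]\bigr), \\
        v, & t \in \set{T+1, 2T+2} \cup \bigl([4T+2] \setminus [3T+2]\bigr), \\
        q(t - 2T - 2), & t \in [3T+2] \setminus [2T+2].
    \end{cases}
\]
The first \(2T+2\) steps are a stretched version of the single-edge strategy of \Cref{lem:waiting-freestart-edge} on \(\set{v, w}\). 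The block \([3T+2] \setminus [2T+2]\) runs a shifted copy of \((\tau, q)\) on~\(G\), and the final block guards~\(v\).

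As in the proof of \Cref{thm:waiting-freestart-general}, I first note that every edge of~\(G\) has delay at most~\(T\), since otherwise an evader entering it at time~\(0\) would beat \((\tau, q)\). Next I use that \((\tau, q)\) is freestart-winning. Suppose the evader is somewhere in~\(G\) (at a vertex or inside an edge of~\(E\)) at time \(2T+2\) and stays in~\(G\) through time \(3T+2\), including being possibly at~\(w\) at that time. Then their walk, viewed from time \(2T+2\) as time~\(0\), is a lazy walk with early start and total delay at least~\(T\), so the shifted copy of~\(q\) catches them. Hence an evader in~\(G\) at time \(2T+2\) must enter \(\set{w, v}\) toward~\(v\) at some time \(s \in [3T+1] \setminus [2T+1]\). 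They then arrive at~\(v\) at time \(s + T + 1 \in [4T+2] \setminus [3T+2]\) and are caught.

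It therefore remains to show that at time \(2T+2\) the evader can be neither at~\(v\) nor inside the new edge. This is the main obstacle, a case analysis over the first \(2T+2\) steps.

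I would go backwards from time \(2T+2\). If the evader is at~\(v\) at that time, they are hit. If they are inside the edge heading toward~\(v\), they entered it at~\(w\) at some time in \([2T+1] \setminus [T+1]\), where \(w\) is queried. If they are inside the edge heading toward~\(w\), they left~\(v\) at a time in \([2T+1] \setminus [T+1]\), so they were at~\(v\) at time \(T+1\) (hit) or reached~\(v\) after \(T+1\). The latter means they entered the edge at~\(w\) at a time in \([T] \cup \set{0}\), or were already inside it earlier.

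Entering at time \(t \in [T]\) is hit at~\(w\). Entering at time~\(0\), or any earlier start inside the edge heading toward~\(v\), yields arrival at~\(v\) at a time at most \(T+1\). The evader then sits at~\(v\) at time \(T+1\), unless they left earlier toward~\(w\), in which case they arrive at~\(w\) during \([2T+1] \setminus [T+1]\) or during~\([T]\), where \(w\) is queried.

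The delicate points are the boundary times: arrival at~\(w\) exactly at time \(T+1\), which is unqueried there but forces leaving, and waiting that bridges blocks. I would handle each by checking that the one free time slot at~\(w\) or~\(v\) is always followed by a query at the endpoint the evader can reach within \(T+1\) steps.

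Finally, the duration of \(q'\) is \(\max\supp(q') = 4T+2\), which gives the claimed bound.
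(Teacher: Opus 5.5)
Your strategy \((\tau', q')\) is exactly the paper's, and your verification is correct, but it is organized differently. The paper takes the vertex \(v_0\) occupied at the latest non-positive time \(t_0\). It uses that every delay is at most \(T+1\) to get \(t_0 \geq -T\), and then traces the two cases \(v_0 = v\) and \(v_0 \in V\) forward through the query blocks, following the template of \Cref{thm:waiting-freestart-general}. You instead use time \(2T+2\) as a checkpoint. An evader in \(G\) at that time (at a vertex, or inside an old edge and restarted as an early start at the vertex where they entered it) either stays in \(G\) through \(3T+2\) and is caught by the shifted copy of \(q\), or enters \(\set{w,v}\) at a time in \([3T+1]\setminus[2T+1]\) and is caught at \(v\). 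A short backward look then rules out being at \(v\), or inside \(\set{v,w}\), unhit at time \(2T+2\). This buys a shorter argument that never needs to bound \(t_0\), so your remark that old delays are at most \(T\) is actually unused. It also replaces the split by starting vertex with a split by position at a single time. The paper's forward tracing, in return, is more explicit about which trajectories survive. Your third backward case already closes on its own: being at \(v\) at some time in \([2T+1]\setminus[T+1]\) without being there at time \(T+1\) means the current visit to \(v\) began after \(T+1\). So the evader entered \(\set{w,v}\) at \(w\) at a time in \([T]\), where \(w\) is queried. Hence the branches about entering at time \(0\) or earlier, and the ``delicate'' boundary of arriving at \(w\) at time \(T+1\), are vacuous in your decomposition. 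Such a trajectory cannot be at \(v\) in that window, and otherwise it falls into the case ``in \(G\) at time \(2T+2\)'' that you have already handled. (Minor: \(q'\) should be declared partial, \(q' \colon [4T+2] \rightharpoonup V \cup \set{v}\), since \(q\) may be.)
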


\begin{proof}
    Define \(V' \defas V \cup \set{v}\) and \(E' \defas E \cup \set{e}\) with \(e \defas \set{v, w}\).
    We construct a winning strategy \((\tau', q')\) for \(G' = (V', E')\) as follows.
    Let the new delays \(\tau' \colon E' \to \Zp\) be given by
    \[
        \tau'(e) \defas \begin{cases}
            \tau(e), & e \in E, \\
            T + 1, & e = \set{v, w},
        \end{cases}
    \]
    and define the query function \(q' \colon [4T + 2] \rightharpoonup V'\) such that
    \[
        q'(t) \defas \begin{cases}
            w, & t \in [2T + 1] \setminus \set{T + 1}, \\
            v, & t \in [4T + 2] \setminus [3T + 2] \lor t \in \set{T + 1, 2T + 2}, \\
            q\bigl(t - (2T + 2)\bigr), & t \in [3T + 2] \setminus [2T + 2].
        \end{cases}
    \]
    The strategy is visualized in \Cref{fig:waiting-freestart-leaf}.
    Clearly, \(\max\supp(q') = 4T + 2\).
    The remainder of the proof follows the same line of arguments as the induction step in the proof of \Cref{thm:waiting-freestart-general}.

    Suppose towards a contradiction that the evader has a winning strategy starting at time \(t_{\mathrm{e}} \leq 0\), and let \(t_0\) with \(t_{\mathrm{e}} \leq t_0 \leq 0\) be the latest non-positive time at which they are located at a vertex.
    Denote this vertex by~\(v_0\).
    We first observe that the largest delay of any edge is \(T + 1\).
    Otherwise, it must be that \(\tau(e) > T + 1 > \max\supp(q)\) for some edge \(e \in E\), but then \((\tau, q)\) could not be winning on~\(G\) as the evader could enter~\(e\) at time~\(0\) to evade all queries.
    It follows that \(-t_0 \in \langle{T}\rangle\), as \(t_0 \leq -(T + 1)\) would contradict the choice of~\(t_0\).

    First consider the case that \(v_0 = v\).
    As the pursuer queries \(q'(T + 1) = v\), the evader must leave~\(v\) along~\(e\) at a time~\(t_1\) with \(-T \leq t_0 \leq t_1 \leq T\).
    The evader thus arrives at~\(w\) at time \(t_2 \defas t_1 + (T + 1) \in [2T + 1]\).
    As \(q'(t) = w\) for all \(t \in [2T + 1] \setminus \set{T + 1}\), it must be that \(t_2 = T + 1\).
    Since \(q'(t_2 + 1) = q'(T + 2) = w\), the evader cannot stay at~\(w\) for another time step.
    If they entered~\(e\) again at time~\(t_2\), they would arrive back at~\(v\) at time \(t_2 + (T + 1) = 2T + 2\) and be caught by the one-off query \(q'(2T + 2) = v\).
    The evader thus enters an edge \(e' = \set{w, w'}\) towards a vertex \(w' \in V \setminus \set{w}\) at time~\(t_2\).
    If the evader then stays within the subgraph~\(G\) until time \(3T + 2\), they would be caught by the copy of the winning strategy executed in the time range \([3T + 2] \setminus [2T + 2]\).
    It follows that the evader enters~\(e\) towards~\(v\) at a time~\(t_3\) with \(t_2 \leq t_3 \leq 3T + 1\).
    If \(t_3 \leq 2T + 1\), then the evader would be located at~\(w\) at time \(t_3 \in [2T + 1] \setminus [t_2]\), and be hit by the query \(q'(t_3) = w\).
    Thus, \(t_3 \in [3T + 1] \setminus [2T + 1]\), and so the evader arrives at~\(v\) at time \(t_4 \defas t_3 + (T + 1) \in [4T + 2] \setminus [3T + 2]\).
    But this is not possible in an evasive walk, as \(q'(t_4) = v\).

    It remains to consider the case where the evasive walk starts at a vertex \(v_0 \in V\).
    If the evader stays within the subgraph~\(G\) until time \(3T + 2\) then they are caught by the sub-strategy, so they must be located at~\(w\) and enter~\(e\) at a time~\(t_1\) with \(-T \leq t_0 \leq t_1 \leq 3T + 1\).
    We can rule out \(t_1 \in [2T + 1] \setminus \set{T + 1}\) as otherwise the evader is hit by the query \(q'(t_1) = w\).
    If \(t_1 = T + 1\), then the evader arrives at~\(v\) at time \(t_1 + (T + 1) = 2T + 2\) and is hit by the one-off query at that time.
    If \(t_1 \in [3T + 1] \setminus [2T + 1]\), then the evader arrives at~\(v\) at time \(t_2 \defas t_1 + (T + 1) \in [4T + 2] \setminus [3T + 2]\), where they are again caught by the query \(q'(t_2) = v\).
    It follows that \(-T \leq t_1 \leq 0\), so that the evader reaches~\(v\) at time \(t_2' \defas t_1 + (T + 1) \in [T + 1]\).
    As \(q'(T + 1) = v\), it must be that \(t_2' \in [T]\), and further that the evader leaves~\(v\) again towards~\(w\) at some time \(t_3 \geq t_2'\) with \(t_3 \leq T\).
    But then, they arrive at~\(w\) at time \(t_4 = t_3 + (T + 1) \in [2T + 1] \setminus [T + 1]\), where \(q'(t_4) = w\).
\end{proof}

Applied to subgraphs that are trees, \Cref{lem:waiting-freestart-leaf} implies the following search time.

\begin{theorem}\label{thm:waiting-freestart-trees}
    Trees are waiting-freestart-solvable in time~\(\calO(4^n)\).
\end{theorem}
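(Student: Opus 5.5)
The plan is induction on the number of vertices, with \Cref{lem:waiting-freestart-leaf} as the induction step. Every tree on \(n \geq 2\) vertices can be built from a single vertex, or from a single edge, by repeatedly attaching a pendant vertex. Concretely, repeatedly deleting a leaf reduces the tree to a single vertex or edge, and reading this sequence backwards gives the build-up. Since the lemma needs a winning strategy on the current subtree with a known duration, I will track the durations \(T_k\) of the strategies on the intermediate trees with \(k\) vertices. The goal is to show \(T_n \in \calO(4^n)\).

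For the base case I take the single-vertex tree, which is solvable in time \(T_1 = 1\) by querying its only vertex at time~\(1\). This works even against an evader who waits and starts early: the graph has no edges, so the evader must be sitting at that vertex at time~\(1\). Alternatively, I can start from \Cref{lem:waiting-freestart-edge} with \(T_2 = 4\). Each application of \Cref{lem:waiting-freestart-leaf} to a tree on \(k\) vertices with a strategy of duration \(T_k\) gives a strategy on the \((k+1)\)-vertex tree with duration \(T_{k+1} = 4T_k + 2\). The hypotheses of the lemma hold because the current subtree is waiting-freestart-solvable by induction and the new vertex is pendant at some \(w\) already in it.

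It remains to solve the recurrence. Setting \(S_k \defas T_k + 2/3\) gives \(S_{k+1} = 4 S_k\). With \(T_1 = 1\) this yields
\[
T_n = \tfrac{5}{3} \cdot 4^{n-1} - \tfrac{2}{3} \in \calO(4^n).
\]
The same bound follows from the start \(T_2 = 4\).

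The only point I expect to need care is the base case of a single isolated vertex. I must check that the lemma's argument does not implicitly use an edge in~\(G\). For \(G\) with no edges, the lemma's statement "the largest delay of any edge is \(T+1\)" still holds trivially. The case where the evader stays inside \(G\) until time \(3T+2\) is handled by the copy of \(q\), which queries the lone vertex~\(w\). If this worries a referee, I will start the induction at \(n = 2\) using \Cref{lem:waiting-freestart-edge} and treat \(n = 1\) separately; the asymptotic bound is unchanged.
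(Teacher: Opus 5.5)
Your proof is correct and is essentially the paper's argument: induction on \(n\) from the single-vertex base case \(T(1)=1\), with \Cref{lem:waiting-freestart-leaf} applied at a leaf giving \(T(n)=4T(n-1)+2=(5\cdot 4^{n-1}-2)/3\). Your check that the lemma's argument still works when \(G\) is a single edgeless vertex is a sensible sanity check that the paper leaves implicit.
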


\begin{proof}
    We show by induction on~\(n\) that every tree on \(n\)~vertices admits a winning strategy with duration \(T(n) = (5 \cdot 4^{n-1} - 2)/3\).
    If \(n = 1\), a single query suffices, so \(T(n) = 1 = (5 \cdot 1 - 2)/3\).
    For \(n \geq 2\), applying \Cref{lem:waiting-freestart-leaf} to an arbitrary leaf yields a winning strategy of duration
    \[
        T(n)
        = 4T(n-1) + 2
        = \frac{4}{3} (5 \cdot 4^{n-2} - 2) + 2
        = \frac{1}{3} (5 \cdot 4^{n-1} - 2). \qedhere
    \]
\end{proof}

\section{Discussion}

We studied a pursuit-evasion game on undirected graphs in which a single pursuer has the power to control the travel times on each edge and then query an arbitrary vertex per time step.
We gave explicit winning strategies for the pursuer on all graphs, even if the evader can wait at vertices or start their walk at a time in the past unknown to the pursuer.
If the evader must start at time zero, then our strategies require only a polynomial number of time steps in the size of the graph, while in the free-start setting, we only know an exponential-duration strategy, even for trees.
Both of the free-start results assume that the evader can wait, and it is an interesting question whether a polynomial-duration solution exists if they cannot.
A potential tool for exploiting forced moves is parity: assign delays such that the vertices admit a bipartition with the property that walks ending in the same part have even total delay, whereas walks starting and ending in different parts have odd total delay.
Then, one can first execute a strategy assuming that the evader started in one part, and then repeat the same strategy for the other part.
We could derive full algorithms from this idea, but they were outclassed by \Cref{thm:waiting-freestart-general,thm:waiting-freestart-trees}.

A computer search that identified much shorter solutions for graphs of order at most four (see \Cref{sec:computations}) sparks hope that a polynomial-duration strategy might exist even in the general setting where the evader is most powerful: the complete graph~\(K_n\) with \(n\)~ranging from~\(2\) to~\(4\) is waiting-freestart-solvable in times \(4\), \(14\), and~\(32\), whereas the strategy of \Cref{thm:waiting-freestart-general} has respective durations of \(4\), \(32\), and~\(256\).
A complementary question worth pursuing is whether the provided polynomial search times are asymptotically optimal for the fixed-start settings.

\bibliographystyle{abbrvnat}
\setlength{\bibsep}{6pt plus 1pt minus 1pt}
\bibliography{manuscript}

\appendix
\crefalias{section}{appendix}

\section{Deferred proofs}%
\label{sec:deferred}

\bhshifting*

\begin{proof}
    Since \(A = \set{a_1, \ldots, a_m}\) is a \(B_h\)-set, we have for any \(c, d \in \Zn^m\) with \(\lVert c \rVert_1 = \lVert d \rVert_1 = h\) (implying \(\lVert c \rVert_\infty, \lVert d \rVert_\infty \leq h\)) that
    \[
        \sum_{i=1}^m c_i a_i = \sum_{i=1}^m d_i a_i \implies c = d.
    \]
    Let \(t \defas \min A\) and assume without loss of generality that \(t = a_m\).
    As
    \[
        \sum_{i=1}^m c_i (a_i - t) = \sum_{i=1}^m d_i (a_i - t)
        ~\Longleftrightarrow~
        \sum_{i=1}^m c_i a_i - ht = \sum_{i=1}^m d_i a_i - ht
        ~\Longleftrightarrow~
        \sum_{i=1}^m c_i a_i = \sum_{i=1}^m d_i a_i,
    \]
    also \(A' \defas \set{a - t \mid a \in A}\) is a \(B_h\)-set with \(0 \in A'\).
    Label \(A' = \set{a_1', \ldots, a_m'}\) such that \(a_i' = a_i - t\) for all \(i \in [m]\), in particular \(a_m' = 0\), and suppose towards a contradiction that \(A' \setminus \set{0} = \set{a_1', \ldots, a_{m-1}'}\) is not a \(B_{0,h}^h\)-set.
    Then, there are coefficients \(c \neq d \in \Zn^{m-1}\) with \(\lVert c \rVert_1, \lVert d \rVert_1 \leq h\) (implying \(\lVert c \rVert_\infty, \lVert d \rVert_\infty \leq h\)), such that
    \[
        \sum_{i=1}^{m-1} c_i a_i' = \sum_{i=1}^{m-1} d_i a_i'.
    \]
    Consider now coefficients \(c', d' \in \Zn^m\) with
    \[
        c_i' \defas \begin{cases}
            c_i, & i < m, \\
            h - \lVert c \rVert_1, & i = m,
        \end{cases}
        \qquad\text{and}\qquad
        d_i' \defas \begin{cases}
            d_i, & i < m, \\
            h - \lVert d \rVert_1, & i = m.
        \end{cases}
    \]
    By construction, we have that \(\lVert c' \rVert_1 = \lVert d' \rVert_1 = h\), and further
    \[
        \sum_{i=1}^m c_i' a_i'
        = \sum_{i=1}^{m-1} c_i' a_i' + c_m' a_m'
        = \sum_{i=1}^{m-1} c_i a_i'
        = \sum_{i=1}^{m-1} d_i a_i'
        = \sum_{i=1}^{m-1} d_i' a_i' + d_m' a_m'
        = \sum_{i=1}^m d_i' a_i'.
    \]
    Since \(A'\)~is a \(B_h\)-set, it follows that \(c' = d'\) and by extension that \(c = d\), a contradiction.
    It follows that \(A' \setminus \set{0}\) must be a \(B_{0,h}^h\)-set, and by definition also a \(B_{g,h}^r\)-set for any \(g, r \leq h\).
\end{proof}

\waitingfreestartedge*

\begin{proof}
    Let \(G = (V, E)\) as in the statement, and let \(e \in E\) be the only edge.
    Assign a delay of \(\tau(e) \defas 2\) and query according to \(q \colon [4] \to V\) with \(q(t) \defas t \bmod 2 + 1\).
    Suppose that the evader has a winning strategy starting at time \(t_{\mathrm{e}} \leq 0\), and let~\(t_0\) with \(t_{\mathrm{e}} \leq t_0 \leq 0\) be the latest non-positive time at which they are located at a vertex.
    Let further \(v_0 \in \set{1, 2}\) be this vertex.
    As the largest delay of any edge is \(\tau(e) = 2\), we have that \(t_0 \in \set{-1, 0}\).
    Suppose first that \(t_0 = -1\).
    This implies that the evader enters the edge at time~\(t_0\), as otherwise they would still be at a vertex at time~\(0\), contradicting the choice of~\(t_0\).
    If \(v_0 = 1\), then the evader arrives at vertex~\(2\) at time \(t_0 + \tau(e) = 1\), where they are hit by a query as \(q(1) = 2\).
    If \(v_0 = 2\), then the evader instead arrives at vertex~\(1\) at time~\(1\).
    If they wait there for one time step, then they are hit as \(q(2) = 1\), so they must enter the edge and return to vertex~\(2\) at time~\(3\).
    But here they are also caught, as \(q(3) = 2\).
    It follows that \(t_0 = 0\).
    Suppose first that \(v_0 = 1\).
    If the evader waits at~\(v_0\) until time~\(2\), then they are caught by the query \(q(2) = 1\).
    If they wait for one time step to depart at time~\(1\), then they arrive at vertex~\(2\) at time \(1 + \tau(e) = 3\), and are hit by \(q(3) = 2\).
    Lastly, if they enter the edge immediately, the evader arrives at vertex~\(2\) at time \(\tau(e) = 2\).
    Here, they cannot stay for another time step, as \(q(3) = 2\), but returning immediately plays into the query \(q(2 + \tau(e)) = q(4) = 1\).
    In the remaining case of \(t_0 = 0\) and \(v_0 = 2\), the evader has two options.
    If they wait for one time step, then they are caught by the query \(q(1) = 2\).
    If they enter the edge immediately, then they arrive at vertex~\(1\) at time~\(2\), but this has the same outcome due to \(q(2) = 1\).
    It follows that \((\tau, q)\) is a winning strategy for the pursuer with duration \(\max\supp(q) = 4\).
\end{proof}

\section{Short strategies for the most general setting}%
\label{sec:computations}

In \Cref{fig:k3_14,fig:k4_32} on \cpageref{fig:k3_14,fig:k4_32}, we give explicit winning strategies for the complete graphs \(K_3\) and~\(K_4\), with respective durations of \(14\) and~\(32\), for the general setting where the evader may both wait at a vertex and start at an unknown time in the past.
Recall that an explicit strategy with duration~\(4\) for the~\(K_2\) was given in \Cref{fig:models} on \cpageref{fig:models}.
We conjecture that these strategies are the shortest possible, but know this (from complete enumeration) only for a maximum delay of nine.
This contrasts with the recursive strategy of \Cref{thm:waiting-freestart-general}, which also takes \(8^2/16 = 4\) time steps on the~\(K_2\), but \(8^3/16 = 32\) time steps on the~\(K_3\) and \(8^4/16 = 256\) steps on the~\(K_4\).

\begin{figure}
    \centering
    \input{fig/k3_14.tex}
    \caption{%
        A strategy against a waiting and free-starting evader on the~\(K_3\) of duration~\(14\), with edge delays of \(2\), \(2\), and \(4\).
        See \Cref{fig:models} for keys.
    }
    \label{fig:k3_14}
\end{figure}

\begin{figure}
    \tikzset{_arc/.style={-stealth}}
    \centering
    \input{fig/k4_32.tex}
    \caption{%
        A strategy against a waiting and free-starting evader on the~\(K_4\) of duration~\(32\), with edge delays of \(2\), \(5\), and \(6\), each occurring twice.
        See \Cref{fig:models} for keys.
    }
    \label{fig:k4_32}
\end{figure}

\end{document}